\newtheorem{theorem}{Theorem}[section]
\newtheorem{defi}{Definition}[section]
\newtheorem{lemma}[theorem]{Lemma}
\newtheorem{claim}[theorem]{Claim}
\newcommand{\sq}{\hbox{\rlap{$\sqcap$}$\sqcup$}}
\newcommand{\qed}{\hspace*{\fill}\sq}
\newenvironment{proof}{\noindent {\bf Proof.}\ }{\qed\par\vskip 4mm\par}
\begin{document}

\title{A Primal-Dual based  Distributed Approximation Algorithm for Prize-Collecting Steiner Tree}

\author{Parikshit Saikia$^{1,*}$, Sushanta Karmakar$^{1,\dag}$, and Aris T. Pagourtzis$^{2,\ddag}$\\
$^{*}$s.parikshit@iitg.ernet.in, $^{\dag}$sushantak@iitg.ernet.in, and $^{\ddag}$pagour@cs.ntua.gr\\
		$^{1}$Department of Computer Science and Engineering\\
		Indian Institute of Technology Guwahati, India, 781039\\
		\and
		$^{2}$ School of Electrical and Computer Engineering\\
		National Technical University of Athens\\
		Politechniou 9, GR-15780 Zographou, Greece\\
	}

\maketitle \thispagestyle{empty}


\begin{abstract}
	The Prize-Collecting Steiner Tree (PCST) problem is a generalization of  the Steiner Tree problem that has applications in network design, content distribution networks, and many more. There are a few centralized approximation algorithms \cite{DB_MG_DS_DW_1993, GW_1995, DJ_MM_SP_2000, AA_MB_MH_2011} for solving the PCST problem. However no distributed algorithm is known that solves PCST with a guaranteed approximation factor. In this work we present an asynchronous distributed  $(2 - \frac{1}{n - 1})$-approximation algorithm that constructs a PCST for a given connected undirected graph with non-negative edge weights and a non-negative prize value for each node. Our algorithm is an adaptation of the centralized algorithm proposed by Goemans and Williamson \cite{GW_1995} to the distributed setting, and is based on the primal-dual method. The message complexity of the algorithm with input graph having node set $V$ and edge set $E$ is $O(|V||E|)$. Initially each node  knows only its own prize value and the weight of each incident edge. The algorithm is spontaneously initiated at a special node called the \emph{root node} and when it terminates each node knows whether it is in the PCST or not. To the best of our knowledge this is the first distributed constant approximation algorithm for PCST.
\end{abstract}


\centerline{{\bf Keywords}: Steiner Tree, Prize-Collecting Steiner Tree, Distributed Approximation, Primal-Dual.}



\section{Introduction} \label{intro}
\vspace{-.6em} 
The Minimum Spanning Tree (MST) problem is a fundamental problem in graph theory and network design. Given a connected graph $G=(V,E)$ and a weight function $w : E \rightarrow \mathbb{R}^{+}$, the goal of the MST problem is to find a subgraph $G'=(V,E')$ of $G$ connecting all vertices of $V$ such that $\sum_{e \in E'} w_e$ is minimized. There are many centralized \cite{citeulike:4031585, BLTJ:BLTJ1515} and distributed algorithms \cite{GHS_1983, Faloutsos:1995:ODA:224964.225474} for MST construction. 
Steiner Tree (ST) problem is a generalization of the MST problem. The definition of ST is as follows: given a connected graph $G=(V,E)$ and a weight function $w : E \rightarrow \mathbb{R}^{+}$, and a set of vertices $Z \subseteq V$, known as the set of terminals, the goal of the ST problem is to find a subgraph $G'=(V',E')$ of $G$ such that $\sum_{e \in E'} w_e$ is minimized subject to the condition that $Z \subseteq V' \subseteq V$.  

Both MST and ST problems have many applications in VLSI layout design \cite{JK4069504}, communication networks~\cite{Du:2008:STP:1628718}, transportation networks~\cite{TM_RW_1984} etc. It is known that the MST problem can be solved in polynomial time, however the ST problem is NP-hard~\cite{DBLP:conf/coco/Karp72}. Therefore many polynomial time approximation algorithms have been proposed for the ST problem \cite{Zelikovsky1993, BERMAN1994381, PROMEL200089, Karpinski1997, Robins:2000:IST:338219.338638} with various approximation ratios and complexities. Byrka et al. \cite{Byrka:2010:ILA:1806689.1806769} proposed a polynomial time approximation algorithm for the ST problem for a general graph which has the best known  approximation factor of $\ln 4 + \epsilon \approx 1.386 + \epsilon$, for $\epsilon > 0$. It is a centralized algorithm that uses the technique of {\em iterative randomized rounding} of LPs. It is also known that the ST problem for general graphs cannot be solved in polynomial time with an approximation factor $\leq \frac{96}{95}$ \cite{chlebik:2008:STP:1414105.1414423}. There are many variations of the ST problem such as Directed Steiner Tree \cite{Zosin:2002:DST:545381.545388, AA_AF_BG_2016, Charikar:1998:AAD:314613.314700,  Watel:2013:SPL:2694605.2694640},  Metric Steiner Tree \cite{Robins:2005:TBG:1068396.1071708, Karpinski1997}, Euclidean Steiner Tree \cite{Arora:1998:PTA:290179.290180}, Rectilinear Steiner Tree \cite{Hanan_1966, Hwang_1976, Zelikovsky1993, BERMAN1994381, Karpinski1997}, Steiner Forest \cite{AA_PK_RR_1995, CL_BPS_2014, GW_1995, Williamson1995} and so on. Hauptman and Karpinaski \cite{Hauptman-Karpinaski} provide a website with continuously updated state of the art results for many variants of the problem.  Out of the many variants, we focus on a generalization of the ST problem called ``Prize-Collecting Steiner Tree problem''.

\begin{defi}[{Prize-Collecting Steiner Tree (PCST) problem}]
	Given a connected weighted graph $G = (V, E, p, w)$ where $V$ is the set of vertices, $E$ is the set of edges, $p : V \rightarrow \mathbb{R}^+$ is a non-negative prize function and $w : E \rightarrow  \mathbb{R}^+$ is a non-negative weight function, the goal is to find a tree $T = (V', E')$ where $V' \subseteq V$ and $E' \subseteq E$ that minimizes the following function:
	\begin{align*}
		\hspace{12em} GW(T) = \sum_{e \in E'} w_e + \sum_{v \notin V'} p_v
	\end{align*}
\end{defi}

A feasible solution to the PCST problem has two parts namely \emph{Steiner} and \emph{Penalty}. A node is in the Steiner part if it is covered by $T$, otherwise it belongs to the Penalty part. This problem has applications in situations where various demand points (nodes) need to form a structure with minimum total connection cost. Each demand point has some non-negative prize associated with it. If some of the demand points are too expensive to connect then it might be better not to include them in the structure and instead lose their prize---or, equivalently, pay a penalty, equal or proportional to their prize value. The goal is to minimize the overall cost, considered as the sum of connection costs plus lost prizes. Note that the Steiner Tree problem is in fact a special case of PCST, where we set the prize of terminals to $\infty$ and the prize of all other nodes to $0$; therefore the PCST problem is NP-hard.  

\smallskip
\noindent
{\bf Motivation.} The PCST problem has many  practical applications in network design (e.g. rail-road networks, optical fibre networks), content distribution networks (video on demand, streaming multicast) etc. For example, suppose a company wants to build an optical fibre network to provide broadband internet service to various customers. Here the graph might correspond to a street map, since optical fibre is typically laid along the streets. In this case street segments can be considered as the set of edges, and street intersections and street endpoints as the set of vertices (aka potential customer locations).  The cost of the edges are the installation costs of the cables. The prize associated with the vertex representing a customer is an estimate of revenue obtained by connecting the customer to the network. Vertices corresponding to the intersections other than the customers have zero prize.  An optimal design of this network needs to take care of two objectives, \textit{(i) connect to a set of customers that maximizes the profit} and \textit{(ii) connection cost is minimum}. To achieve this, some customers may be excluded from the structure as they may incur more connection cost. For such customers, the company pays a penalty which is proportional to the prize of the node. Therefore, the overall goal is to decide a subset of customers that should be connected so that the sum of the connection cost and the total penalty (for not connecting some customers) is minimized. This situation can be modelled as the PCST problem. Note that PCST equivalently captures the case where prizes are payments by customers and the objective is to maximize the company's net profit (payments minus connection cost). Similarly many other practical problems in {\em protein-protein interaction network} \cite{Dittrich_2008}, {\em leakage detection system} \cite{Prodon2010} etc.\ can be modelled as cases of the PCST problem.

Problems such as Minimum Spanning Tree \cite{GHS_1983, Faloutsos:1995:ODA:224964.225474}, Steiner Tree  \cite{GENHUEY199373, VPK_JCP_GCP_1993, FB_AV_1996, PC_JF_2005}, Steiner Forest  \cite{MK_FK_DM_GP_KT_2008, CL_BPS_2014} etc.\ have been widely studied in the distributed setting. However, such a study
has not been done so far for PCST (the only attempt seems to be a manuscript~\cite{Rossetti_2015}), despite the potential applicability of the problem. In particular, distributed algorithms for PCST would be necessary for solving the problem in distributed \textsl{ad hoc} networks, where nodes may have very limited knowledge of the network.


\smallskip
\noindent
{\bf Our contribution.} In this work we propose a deterministic distributed algorithm for constructing a PCST for a given graph with the set of vertices with their prizes and the set of edges with their weights. Our algorithm is an adaptation of the centralized algorithm proposed by Goemans and Williamson (GW-algorithm) \cite{GW_1995} to distributed setting. In distributed setting each node is a computing entity and can communicate with its neighbors only. Each node locally decides whether it belongs to the Steiner  part or to the Penalty part. The distributed algorithm, we propose, uses primal-dual technique to construct a PCST with an approximation factor of $(2 - \frac{1}{n-1})$ (where $n = |V|$) which is same as that of the Goemans and Williamson's algorithm \cite{GW_1995}. Also it incurs $O(|E||V|)$ message and time complexity. Moreover, the worst case time complexity can be fine tuned to $O(\mathcal{D}|E|)$ where $\mathcal{D}$ is the diameter of the network. Our algorithm uses a novel idea of preserving dual constraints in a distributed way in order to achieve the desired approximation factor. We believe that this technique can prove useful in other tree problems which can be solved using primal-dual method. The main challenge here is to satisfy the dual constraints using local information instead of global information. To this end we use a careful merging and deactivation of connected components so that each component always satisfy its dual constraints. 

One can design a naive distributed algorithm for the PCST problem by `black-box' use of the GW-algorithm (see Subsubsection~\ref{subsec:message_complexity}, also \cite{Rossetti_2015}). Compared to such a naive algorithm D-PCST has slightly larger worst-case complexity, $O(|V||E|)$ vs.\ $O(|V|^2 \log |V|)$. 
However, the time complexity of the naive distributed algorithm is dictated by the complexity of GW-algorithm, hence it holds irrespective of whether the input graph is sparse or dense. On the other hand, for sparse graphs ($|E| = O(|V|)$) our algorithm requires $O(|V|^2)$ time in the worst case (and even better for small diameter sparse graphs) which is a clear improvement compared to the naive approach. Moreover our algorithm, being genuinely distributed, can be adapted to the dynamic setting (node or link additions or deletions) with very low incremental complexity while the naive algorithm would have to run from scratch. Finally, we expect that our techniques can find further applications in obtaining distributed versions of primal-dual based algorithms for tree problems in graphs.



\smallskip
\noindent
{\bf Paper organization.} Section~\ref{related-work} contains the works
related to the PCST problem. In section~\ref{model}, we introduce the formulation of the PCST problem using integer programming (IP) and linear programming (LP). High level description of our distributed PCST (D-PCST) algorithm as well as an illustrating example are given in Section~\ref{description-D-PCST}. The detailed proof of correctness of our distributed PCST algorithm  is given in section~\ref{proof-of-correctness}. Section~\ref{conclusion} contains brief discussion and open questions. The description of the centralized PCST algorithm proposed by Goemans and Williamson \cite{GW_1995} and the pseudo-code of D-PCST are provided in the appendix. 
%

\smallskip
\section{Related Work}\label{related-work}
 The first centralized approximation algorithm for PCST was given by Bienstock et al. \cite{DB_MG_DS_DW_1993} in 1993, although a related problem named prize collecting travelling salesman problem (PCTSP) was introduced earlier by Balas \cite{BALAS_1989}. Bienstock et al. achieved an approximation factor of 3 by using linear programming (LP) relaxation technique. Two years later, based on the work of Agrawal, Klein and Ravi \cite{AA_PK_RR_1995}, Goemans and Williamson \cite{GW_1995} proposed a primal-dual algorithm using the  LP relaxation which runs in $O(n^2 \log n)$ time. The algorithm proposed by Goemans and Williamson consists of two phases namely {\em growth} phase and {\em pruning} phase and yields a solution of approximation factor $(2 - \frac{1}{n-1})$ of the optimal. This algorithm is often denoted as GW-algorithm.  

Johnson et al. \cite{DJ_MM_SP_2000} proposed an improved version of the GW-algorithm maintaining the same approximation factor $(2 - \frac{1}{n-1})$ as of the GW-algorithm. The improvement is achieved by enhancing the \textit{pruning} phase of GW-algorithm which is termed as \textit{strong pruning}.  Johnson et al. also presented a review of different PCST related problems. They modified the growth phase of the GW-algorithm so that it works without a root node.

However, the result of Johnson et al. \cite{DJ_MM_SP_2000} was shown to be incorrect by Feofiloff, Fernandes, Ferreira, and De Pina \cite{PF_CF_CF_JP_2007}. They proved it by a counter example where the algorithm proposed by Johnson et al. returns an approximation factor of 2  instead of $(2 - \frac{1}{n-1})$. They introduced a new algorithm for the PCST problem  based on the GW-algorithm having  a different LP formulation. They achieved a solution of $(2 - \frac{2}{n})$ approximation factor for the unrooted version of the PCST whose running time is $O(n^2 \log n)$. Archer et al. \cite{AA_MB_MH_2011} provided a $(2-\epsilon)$-approximation ($\epsilon > 0$) algorithm for the PCST problem. Specifically the approximation ratio of this algorithm for PCST is below 1.9672.
They achieved this by using the improved Steiner Tree algorithm of Byrka  et al. \cite{Byrka:2010:ILA:1806689.1806769} as a black box in their algorithm.

The ``quota'' version of the PCST problem was studied by Haouari et al. \cite{Haouari_2010}  in which the goal is to find a subtree that includes the root node and 
has a total prize not smaller than the specified quota, while minimizing the cost of the
PCST. A polynomial time algorithm for PCST  was given by  Miranda et al. \cite{EM_AC_XC_XH_BL_2010} for a special network called \textit{2-tree} where prizes (node weights) and edge weights belong to a given interval. This result is based on the work of Wald and Colbourn \cite{Wald_Colbourn_1983} who proved that Steiner Tree problem is polynomial time solvable on 2-tree. An algorithm for Robust Prize-Collecting Steiner Tree problem was proposed by Miranda et al. \cite{EM_IL_PT_2013}. There are other approaches to solve the PCST problem. Canuto et al. \cite{CAUNTO_2001} gave a multi-start local search based algorithm for the PCST problem. Klau et al. \cite{Klau2004} provided an evolutionary algorithm for the PCST problem. All of these are centralized algorithms for the PCST problem. 

The MST and ST problems have been extensively studied in both centralized and distributed setting.  Galleger, Humblet, and Spira \cite{GHS_1983} presented the first distributed algorithm for MST problem with message complexity $O(|E| + |V|\log|V|)$ and time complexity $O(|V|\log |V|)$. 
Later Faloutsos et al. \cite{Faloutsos:1995:ODA:224964.225474} presented a distributed algorithm for MST with message and time complexity $O(|E| + |V|\log|V|)$ and $O(|V|)$ respectively.
Similarly, in the recent years, many distributed algorithms have been proposed for ST and related problems \cite{GENHUEY199373, VPK_JCP_GCP_1993, FB_AV_1996, PC_JF_2005, MK_FK_DM_GP_KT_2008, CL_BPS_2014}. The first heuristic based distributed algorithm for the ST problem in an asynchronous network was proposed by Chen et al. \cite{GENHUEY199373} with the approximation ratio $2(1 - \frac{1}{l})$ of the optimal, where $l$ is the number of leaves in the optimal ST. It has message complexity $O(|E| + |V|(|V \setminus Z| + \log|V|))$ and time complexity $O(|V|(|V \setminus Z|))$ where $Z$ is the set of terminal nodes. Chalermsook et al. \cite{PC_JF_2005} presented a 2-approximation distributed algorithm for the ST problem with time complexity $O(|V| \log |V|)$ on synchronous networks. 
Similarly, there exist distributed algorithms for other variants of the ST problems. 
The first distributed algorithm for Steiner Forest (SF) problem with approximation factor $O(\log |V|)$ was presented by Khan et al. \cite{MK_FK_DM_GP_KT_2008}.  Recently Lenzen et al. \cite{CL_BPS_2014} proposed a distributed algorithm for constructing a SF in {\em congest} model with an approximation factor of $(2 + \epsilon)$ for $\epsilon > 0$ and time complexity of $O(sk + \sqrt{min (st, n)})$  where $s$ is the shortest path diameter, $t$ is the number of terminals, and $k$ is the number of terminal components in the input.
 
To the best of our knowledge our algorithm is the first distributed algorithm for PCST with a constant approximation ratio. Regarding distributed PCST we were able to find only one manuscript in the literature, by Rossetti \cite{Rossetti_2015}, where two algorithms were proposed: The first one is based on MST heuristic and fails to guarantee any constant approximation ratio. The second is a straightforward implementation where GW-algorithm is used as a `black-box' (similar to the naive approach discussed in Subsubsection~\ref{subsec:message_complexity}); as mentioned in \cite{Rossetti_2015} that algorithm is in essence centralized and of very limited practical value.

\vspace{-1em} 
\section{Model and problem formulation}\label{model}
\vspace{-.8em}
We model the distributed PCST problem on a connected network as a graph $G = (V, E, p, w)$, where vertex set $V$ and edge set $E$ represent the set of nodes and the set of communication links of the network respectively. Each edge $e \in E$ has a non-negative cost denoted by $w_e$.  Each vertex $v \in V$  has an unique identification number and  a non-negative \textit{prize} value denoted by $p_v$. 
We assume that each node in the network knows its own prize value and cost of each of its incident links. 
Each node performs the same local algorithm and communicates and coordinates their actions with their neighbors by passing messages only. We consider that communication links are reliable and messages are delivered in FIFO order. A message sent by a sender is eventually 
received by a receiver. However no upper bound of message delay is assumed. A special node of the network designated as \textit{root} ($r$)  initiates the algorithm. In this work we assume that nodes and links do not fail.

The PCST problem can be formulated as the following integer program (IP).
 \vspace{-.6em}
\begin{gather*}
\vspace{-4em}
Min\ \ \ \ \ 	\sum_{e \in E}w_e x_e + \sum_{U \subset V; r \notin U} z_U \Big(\sum_{v \in U}p_v\Big)\\
\hspace{-26em} Subject \hspace{.4em}to:\\
  \hspace{6.5em}  x(\delta(S)) + \sum_{U \supseteq S} z_U \geq 1 \hspace{8.5em}   S \subset V; r \notin S\\
\hspace{-5em} 	\sum_{U \subset V; r \notin U} z_U \leq 1\\ 
\hspace{6em} 	x_e \in \{0,1\} 			\hspace{10em} 				e \in E\\
\hspace{9em}    z_U \in \{0,1\} 			\hspace{10em} 	 U \subset V; r \notin U
			  	 \vspace{-2em}
\end{gather*}		
For each edge $e\in E$ there is a variable $x_e$ that takes a value in $\{0,1\}$. Here $\delta(S)$ denotes the set of edges having exactly one endpoint in $S$ and 
$x(\delta(S)) = \sum_{e \in \delta(S)} x_e$. For every possible $U \subset V : r\notin U$, there is a variable $z_{U}$ that takes values from $\{0,1\}$. A tree $T = (V', E')$ rooted at the root node $r$ corresponds to the following integral solution of the IP: $x_e = 1$ for each $e \in E'$, $z_{V \setminus V'} = 1$ and all other variables are zero. The first integral constraint says that a subset of nodes $S \subset V$ ($r \notin S$) is connected to $T$ if there exists at least one $e \in \delta(S)$ such that $x_e = 1$ or it is not connected to $T$ if $S \subseteq U \subset V$ ($r \notin U$), $x_e =0$ for all $e \in \delta(S)$ and $z_U =1$. 
The second integral constraint of the IP implies that there can be at most one such $U \subset V$ such that $r \notin U$ for which $z_{U}=1$. Note that we can set $p_r = \infty$ since every feasible tree is required to include the root node $r$.

Since finding the exact solution of an IP is NP-hard and LP (linear programming) is polynomial time solvable, therefore we generally go for its LP-relaxation and find an approximate solution for the problem. 
Note that dropping of the constraint $\sum_{U \subset V; r \notin U} z_U \leq 1$ from LP-relaxation does not affect the optimal solution, therefore we exclude it from the LP-relaxation. The corresponding LP-relaxation is as follows:
\begin{gather*}
\vspace{-2em}
Min\ \ 	\sum_{e \in E}w_e x_e + \sum_{U \subseteq V; r \notin U} z_U \Big(\sum_{v \in U}p_v\Big)\\
\hspace{-28em} Subject \hspace{.4em}to:\\
   \hspace{5.5em}  \sum_{e\in \delta(S)} x_e + \sum_{U \supseteq S} z_U \geq 1 \hspace{8em}   S \subset V; r \notin S\\
			\hspace{7.5em} 	x_e \geq 0		\hspace{10em} 				 e \in E\\
			  	 \hspace{11em}    z_U \geq 0  \hspace{10em} 	 U \subset V; r \notin U
			  	 \vspace{-4em}
\end{gather*}

\vspace{-.2em}
The above LP-relaxation has  two types of basic variables namely $x_e$ and $z_U$ and exponential number of constraints.  If it is converted into its dual then there will be one type of basic variables and two types of constraints. Also by weak LP-duality every feasible solution to the dual LP gives a lower bound on the optimal value of the primal LP. The dual of the above LP-relaxation is as follows:
\begin{gather*}
\hspace{-4.5em}Max \sum_{S\subseteq V -\{r\}}  y_S\\
\hspace{-26em} Subject \hspace{.4em}to:\\
 \hspace{6em} \sum_{S:e \in \delta(S)}y_S \leq w_e  	\hspace{9.6em} 		e \in E\\
		\hspace{10em} \sum_{S \subseteq U} y_S \leq \sum_{v \in U}p_v		\hspace{9em} U \subset V; r \notin U \\
			\hspace{10em}	y_S	\geq 0		\hspace{13em} 		S \subset V; r \notin S 
\end{gather*}		
Here the variable $y_{S}$  corresponds to the primal constraint $\sum_{e\in \delta(S)} x_e + \sum_{U \supseteq S} z_U \geq 1 $.	 
The dual objective function indicates that for each $S \subseteq V \setminus \{r\}$,  the variable $y_S$ can be increased as much as possible without violating the two dual constraints $\sum_{S:e \in \delta(S)}y_S \leq w_e$  and $\sum_{S \subseteq U} y_S \leq \sum_{v \in U}p_v$.
The constraint $\sum_{S:e \in \delta(S)}y_S \leq w_e$ is known as {\em edge packing} constraint which is corresponding to the primal variable $x_e$. It says that for each $S \subseteq V \setminus \{r\}$ such that $e \in \delta(S)$, $y_S$ can be increased as much as possible until the {\em edge packing} constraint becomes tight,  i.e. $\sum_{S:e \in \delta(S)}y_S = w_e$. This equality implies the case where the primal variable $x_e=1$ for the corresponding edge $e$, and $e$ is added to the forest being constructed. The value $w_e$ contributes to the primal objective value of the PCST.  The dual constraint $\sum_{S \subseteq U} y_S \leq \sum_{v \in U}p_v$ is known as {\em penalty packing} constraints which is corresponding to the primal variable $z_U$ of the LP relaxation. For each $S \subseteq U$ such that $r \notin U$, $y_S$ can be increased as much as possible until the penalty packing constraint becomes tight i.e. $\sum_{S \subseteq U} y_S = \sum_{v \in U}p_v$. 
Any positive value of $y_S$ can be considered feasible provided it does not lead to the violation of any of the two dual packing constraints. If we set $y_S=0$ for each $S \subseteq V \setminus {r}$ then it gives a trivial feasible solution to the dual LP since it satisfies both the packing constraints. The dual LP is feasible at its origin ($y_S = 0$ for each $S \subseteq V \setminus {r}$), whereas primal LP is not feasible at its origin ($x_e = 0$ for each $e \in E$ and $z_U = 0$ for each $U \subseteq V \setminus {r}$).

\section{Description of the D-PCST algorithm} \label{description-D-PCST}

{\bf Terminology}. 
A set of nodes $C \subseteq V$ connected by a set of edges is termed as {\em component}. Each component has a state which can be $\mathit{sleeping}$, $\mathit{active}$ or $\mathit{inactive}$. At node $v \in V$, the state of an incident edge $e$ is denoted as $\mathit{SE(e)}$. The value of $\mathit{SE(e)}$ can be $\mathit{basic}$, $\mathit{branch}$, $\mathit{rejected}$ or $\mathit{refind}$.  At node $v$ the state of an edge $e \in \delta(v)$ is $\mathit{branch}$ if $e$ is selected as a candidate branch edge for the Steiner tree of the PCST.  Any edge inside a component (between two nodes $u,v \in C$) which is not a $\mathit{branch}$ is stated as $\mathit{rejected}$.  If node $v$ receives a message $\mathit{refind\_epsilon}$ on $e \in \delta(v)$ then $\mathit{SE(e)=refind}$. An edge $e$ which  is neither  $\mathit{branch}$ nor $\mathit{rejected}$ nor $\mathit{refind}$ has the state named $\mathit{basic}$. Each component has a leader node which coordinates all the activities inside the component. Each node $v \in \mathit{C}$ locally knows the current state of its component $C$, denoted as $CS(C)$. Each component $C$ has a weight denoted by $W(C)$ which is known to each $v\in C$.
Each node $v$ has a $\mathit{deficit}$ value denoted by $d_v$. The constraint $d_v + d_u \leq w_e$ always holds for any edge $e = (v, u) \in E$. In addition, the following symbols and terms are used in the description of our algorithm.
\begin{itemize}
	\setlength\itemsep{.1em}
	\item $\mathit{\delta(v)}$ denotes the set of edges incident on $v$.
	\item $\mathit{\epsilon_e}$ is a value calculated for an edge $e$.  
	\item $\mathit{\epsilon_1(v)} = \smash{\displaystyle\min_{e \in \mathit{\delta(v)} \cap \mathit{\delta(C)}}} \{\mathit{\epsilon_e}\}$.
	\vspace{.3em}
	\item $\mathit{\epsilon_1(C) = \smash{\displaystyle\min_{v \in C}} \hspace{.5em} \{ \epsilon_1(v)\}}$.
	\vspace{.3em}
	\item $\mathit{\epsilon_2(C) = \sum_{v \in C} p_v - W(C)}$.
	\item $\mathit{d_h(C) = \smash{\displaystyle\max_{ v \in C}}  \hspace{.5em} \{d_v\}}$. We use $\mathit{d_h(C)}$ to denote the highest {\em deficit} value of a component $C$.
		\vspace{.2em}
	\item MOE ({\em minimum outgoing  edge}) is the edge $e \in \mathit{\delta(C)}$ that gives the $\mathit{\epsilon_1(C)}$. 
	\item  A component $C'$ is  called a {\em neighboring} component of a component $C$ if $\mathit{\delta(C) \cap \delta(C') \neq \phi}$.
	\item $\langle M \rangle$ denotes the message $M(a_1, a_2,...)$. Here $a_1, a_2,...$ are the arguments of message $M$. Note that unless it is necessary arguments of $\langle M \rangle$ will not be shown in it.  
\end{itemize}

\noindent
{\bf Input and output specification}. 
Initially each node $v \in V$ knows its own prize value $p_v$, unique identity, and weight $w_e$ of each edge $e \in \mathit{\delta(v)}$. Before the start of the algorithm, $\mathit{prize\_flag = TRUE}$ for all $v \neq r$. If $v=r$ then $\mathit{prize\_flag = FALSE}$. Also each node $v \in V$ initially sets its local boolean variable $\mathit{labelled\_flag = FALSE}$ and  $\mathit{SE(e) = basic}$ for each $\mathit{e \in \delta(v)}$. When the D-PCST algorithm terminates, each node $v \in V$ knows whether it is in the Penalty part or in the Steiner part. A node $v$ belongs to the Penalty part if its local variable $\mathit{prize\_flag}$ is set to $\mathit{TRUE}$. Otherwise it belongs to the Steiner part. In addition, if a node $v$ belongs to the Steiner part then at least one $\mathit{e \in \delta(v)}$ must be assigned as a $\mathit{branch}$ edge.  So the pair $\mathit{(prize\_flag, Y)}$ at each node $v$ clearly defines the distributed output of the algorithm. Here $\mathit{Y \subseteq \delta(v)}$. If $\mathit{prize\_flag = TRUE}$ then $Y = \phi$. Otherwise for each $e\in Y$, $\mathit{SE(e)=branch}$.  

\medskip
\noindent
{\bf Basic principle}. 
Our algorithm consists of two phases namely {\em growth} phase and {\em pruning} phase. 
At the beginning of the algorithm each component comprises of a single node. Initially each component except the root component (containing the special node $r$, the root) is in $\mathit{sleeping}$ state. The initial state of the root component is $\mathit{inactive}$. The root node $r$ initiates the algorithm. At any instant of time  the algorithm maintains a set of components. The growth phase performs the following operations until all components in the network become $\mathit{inactive}$. 
\begin{enumerate}[(i)]
\setlength\itemsep{-.1em}
	\item Merging: merging of two distinct neighbouring components $\mathit{C}$ and  $\mathit{C'}$.
	\item Deactivation: an active component becomes inactive.
	\item Proceed: an inactive component $\mathit{C}$ sends $\langle proceed \rangle$ to a neighboring component $\mathit{C'}$.
	\item Back: an inactive component $\mathit{C}$ sends $\langle back \rangle$ to an inactive neighboring component $\mathit{C'}$.
\end{enumerate}
Whenever all components in the network become inactive, the growth phase terminates and the pruning phase starts. The pruning phase prunes a component or subcomponent from the structure constructed in the growth phase if such a pruning enhances the PCST.

\medskip

\noindent
{\bf Growth phase}. Initially, $\mathit{d_v} =0$ and $\mathit{W(C)} = 0$ at each node $v \in V$ ($v \in \mathit{C}$).  
At any point of time only one component $\mathit{C}$ calculates its  $\mathit{\epsilon(C)}$. The leader of $\mathit{C}$ computes  $\mathit{\epsilon(C) = \min (\epsilon_1(C), \epsilon_2(C))}$ using message passing. 
Depending on the value of $\mathit{\epsilon(C)}$, the leader of $\mathit{C}$ proceeds with any one of the  following actions.
\begin{enumerate}[(i)]
\setlength\itemsep{-.1em}
	\item If $\mathit{CS(C) = active}$ then it may decide to merge with one of its neighboring component $\mathit{C'}$ or it may decide to become inactive.
	\item If $\mathit{CS(C) = inactive}$ then it asks one of its neighboring components, say $\mathit{C'}$,
	to proceed further. The choice of $\mathit{C'}$ depends on the value $\mathit{\epsilon_1(C)}$ computed at $\mathit{C}$.  Note that an inactive component $\mathit{C}$ never computes  $\mathit{\epsilon_2(C)}$ and its $\mathit{\epsilon(C)}$ is equal to $\mathit{\epsilon_1(C)}$.
\end{enumerate}

To compute $\mathit{\epsilon_1(C)}$, the leader of $\mathit{C}$ broadcasts  $\langle initiate \rangle$ using the tree rooted at the leader (set of branch edges inside a component $\mathit{C}$ forms a tree rooted at the leader) asking each {\em frontier} node $v \in \mathit{C}$ to finds its $\mathit{\epsilon_1(v)}$. A node $v \in \mathit{C}$ is called a {\em frontier} node if it has at least one edge $\mathit{e \in \delta(v) \cap \delta(C)}$. Upon receiving $\langle initiate \rangle$, each frontier node $v \in \mathit{C}$ calculates $\mathit{\epsilon_e}$ for each edge $\mathit{e \in \delta(v) \cap \delta(C)}$. Note that if an edge $e$ satisfies the condition $\mathit{e \in \delta(v) \cap \delta(C)}$ then the state of the edge $e$ at node $v$ is either $\mathit{basic}$ or $\mathit{refind}$. Let $\mathit{C'}$ be a neighboring component of $\mathit{C}$ such that $\mathit{e \in \delta(u)}$,  $\mathit{e \in \delta(v)}$,  $\mathit{u \in C'}$, and $\mathit{v \in C}$. Now $v$ calculates $\mathit{\epsilon_e}$ as follows. 

\begin{enumerate}[(i)]
\item $\mathit{CS(C) = active}$ and $\mathit{CS(C') = active}$ : in this case $\mathit{\epsilon_e = \frac{w_e - d_v - d_u}{2}}$.
\item $\mathit{CS(C) = active}$ and $\mathit{CS(C') = inactive}$ : in this case $\mathit{\epsilon_e = w_e - d_v - d_u}$. 
\item $\mathit{CS(C) = active}$ and $\mathit{CS(C') = sleeping}$ : in this case $\mathit{\epsilon_e = \frac{w_e - d_v - d_u}{2}}$. Here the state of the component $\mathit{C'}$ is $\mathit{sleeping}$ and therefore the deficit value $\mathit{d_u}$ of the node $\mathit{u \in C'}$ is considered to be equal to $\mathit{d_h(C)}$.
\item $\mathit{CS(C) = inactive}$ and $\mathit{CS(C') = sleeping}$ : in this case $\mathit{\epsilon_e = w_e - d_v - d_u}$. Similar to the previous case the deficit value $\mathit{d_u}$ is equal to $\mathit{d_h(C)}$.
\item $\mathit{CS(C) = inactive}$ and $\mathit{CS(C') = inactive}$ : in this case the value of $\mathit{\epsilon_e}$ for an edge $\mathit{e \in \delta(v)}$ calculated by $v$ depends on the state of the edge $e$. If $\mathit{SE(e) = refind}$ then $\mathit{\epsilon_e = w_e - d_v - d_u}$. Otherwise, $\mathit{\epsilon_e = \infty}$.
\end{enumerate} 
Note the following cases. 
\begin{itemize}
\item  Whenever an inactive component $\mathit{C}$ is in the state of computing its $\mathit{\epsilon_1(C)}$ then there cannot exist any component $\mathit{C'}$ in the  neighborhood of $\mathit{C}$ such that $CS(\mathit{C'})=active$.
\item A component $\mathit{C}$ never computes its $\mathit{\epsilon_1(C)}$ (or $\mathit{\epsilon_2(C)}$) while it is in the  $\mathit{sleeping}$ state. 
\end{itemize}
\noindent
Following these conditions a frontier node $v$ calculates the value of $\mathit{\epsilon_e}$ for each of its incident $\mathit{basic}$ or $\mathit{refind}$ edge and $\mathit{\epsilon_1(v)}$ is locally selected for reporting to the leader of $\mathit{C}$. In this way each frontier node $\mathit{v \in C}$ locally calculates $\mathit{\epsilon_1(v)}$  and reports it to the leader using convergecast technique over the tree rooted at the leader of the component. During the convergecast process each node sends its $\langle report \rangle$ which contains $\mathit{\epsilon_1(v)}$ to its parent node using its incident branch edge of the tree. In this process overall $\mathit{\epsilon_1(C)}$ survives and eventually reaches the leader node of $\mathit{C}$. Also during the convergecast  each node $\mathit{v \in C}$ reports the total prize value of all the nodes in the subtree rooted at $v$. Therefore, eventually the total prize of the component $\mathit{C}$ ($\mathit{TP(C)}$) is also known to the leader. If $\mathit{CS(C) = active}$ then the leader of $\mathit{C}$ calculates $\mathit{\epsilon_2(C) = \sum_{v\in C}p_v - W(C) =  TP(C) - W(C)}$ .

The leader of $\mathit{C}$ now computes $\mathit{\epsilon(C) = \min(\epsilon_1(C), \epsilon_2(C))}$.
If $\mathit{\epsilon(C) = \epsilon_2(C)}$ then $\mathit{C}$ decides to deactivate itself.  This indicates that the dual penalty packing constraint $\mathit{\sum_{S \subseteq C} y_S \leq \sum_{v \in C}p_v}$ becomes tight for the component $\mathit{C}$. On the other hand, if $\mathit{\epsilon(C) = \epsilon_1(C)}$ then it indicates that for component $\mathit{C}$ the dual edge packing constraint $\mathit{\sum_{S:e \in \delta(S)}y_S \leq w_e}$ becomes tight for one of the edges $e$ such that $e \in \delta(\mathit{C}) \wedge e \in \delta(\mathit{C}')$, where $\mathit{C'}$ is neighboring component of $\mathit{C}$. In this case the leader of $\mathit{C}$ sends a $\langle merge(\mathit{\epsilon(C), d_h(C)})$ to a frontier node $\mathit{v \in C}$ which resulted the $\mathit{\epsilon(C) = \epsilon_1(C)=\epsilon_1(v)}$. Upon receiving $\langle merge(\mathit{\epsilon(C), d_h(C)})$ the node $v$ sends $\langle \mathit{connect(v, W(C), d_v, d_h(C))} \rangle$ over the MOE to $\mathit{C'}$ to  merge with it. Whenever a node $\mathit{u \in C'}$ receives $\langle \mathit{connect(v, W(C), d_v, d_h(C))} \rangle$ on an edge $\mathit{e \in \delta(u)}$ then depending on the state of $\mathit{C'}$ following actions are taken.
\begin{enumerate}[(i)]
	\item  $\mathit{CS(C') = inactive}$ : in this case the node $\mathit{u \in C'}$ sends $\langle accept \rangle$ to $\mathit{v\in C}$. This confirms the merging of two components $\mathit{C'}$ and $\mathit{C}$.
	\item $\mathit{CS(C') = sleeping}$ : in this case it is obvious that $\mathit{C'}$ is a single node component  $\{u\}$. The state of $\mathit{C'}$ becomes $\mathit{active}$ and each of its local variables $d_u$, $\mathit{W(C')}$ and $\mathit{d_h(C')}$ is initialized to $\mathit{d_h(C)}$ (received in the $\langle connect \rangle$). After that the leader of the component $\mathit{C'}$ ($u$ itself) computes $\mathit{\epsilon_e}$ for edge $e$ and $\mathit{\epsilon_2(C')}$.  If $\mathit{\epsilon_e < \epsilon_2(C')}$ then the component $\mathit{C'}$ sends $\langle accept \rangle$ to the component $\mathit{C}$ which confirms the merging of two components $\mathit{C'}$ and $\mathit{C}$. On the other hand if $\mathit{\epsilon_2(C') \leq \epsilon_e}$  then $\mathit{C'}$ decides to deactivate itself and sends $\langle refind\_epsilon \rangle$ to the component $\mathit{C}$.  
\end{enumerate}	

Whenever a node $\mathit{v \in C}$ receives $\langle \mathit{refind\_epsilon} \rangle$ in response to a  $\langle \mathit{connect}\rangle$ on an edge $e \in \delta(v)$ then the state of its local variable $\mathit{SE(e)}$ becomes $\mathit{refind}$. The node $v$ also reports the $\langle \mathit{refind\_epsilon}\rangle$ to the leader of $\mathit{C}$. Upon receiving $\langle \mathit{refind\_epsilon} \rangle$ the leader node of $\mathit{C}$ proceeds to calculate its $\mathit{\epsilon(C)}$ once again. 

Whenever a component $\mathit{C}$ decides to merge or deactivate (only an active component can decide to deactivate itself) then each node $\mathit{v \in C}$ increases each of  $d_v$ and $\mathit{W(C)}$ by  $\mathit{\epsilon(C)}$, and $\mathit{d_h(C)}$ is also updated. Note that for each component $\mathit{C}$ there is an implicit  dual variable $\mathit{y_{C}}$ which we want to maximize subject to the dual constraints. Whenever the local variables of a component $\mathit{C}$ are updated by $\mathit{\epsilon(C)}$, $\mathit{y_C}$ is also implicitly updated. 

If two components $\mathit{C}$ and $\mathit{C'}$ decide to merge through an edge $\mathit{e = (v, u): v \in C \wedge u \in C'}$  then the dual {\em edge packing} constraint $\mathit{\sum_{S:e \in \delta(S)}y_S \leq w_e}$ becomes tight for the edge $e$. Both nodes $v$ and $u$ set their local variables $\mathit{SE(e) = branch}$ for edge $e$. The weight of the resulting component $\mathit{C \cup C'}$ is the sum of the weights of  $\mathit{C}$ and $\mathit{C'}$, i.e. $\mathit{W(C \cup C') = W(C) + W(C')}$. If $\mathit{C \cup C'}$ contains the root node $r$ then it becomes $\mathit{inactive}$ (root component is always $\mathit{inactive}$) and $r$ remains the leader of the new component $\mathit{C \cup C'}$. In addition, whenever a component $\mathit{C'}$ merges with the root component then each $\mathit{v \in C'}$ sets its local variable $\mathit{prize\_flag = FALSE}$ and there exists at least one edge $ \mathit{e \in \delta(v)}$ such that $\mathit{SE(e) = branch}$. This indicates that each node $\mathit{v \in C'}$ contributes to the Steiner part of the  PCST. On the other hand if none of the merging components $\mathit{C}$ or $\mathit{C'}$ is the root component then the resulting component $\mathit{C \cup C'}$ becomes $\mathit{active}$. In this case the node with the higher ID between the two adjacent nodes of the merging edge becomes the new leader of  $\mathit{C \cup C'}$ and for each node $\mathit{v \in C \cup C'}$ the boolean variable $\mathit{prize\_flag}$ remains $\mathit{TRUE}$. 

In case of deactivation of a component $\mathit{C}$, each node $\mathit{v \in C}$ sets its $\mathit{labelled\_flag = TRUE}$. Whenever an active component $\mathit{C}$ becomes inactive and there exists no active component in its neighborhood then the leader of $\mathit{C}$ may decide to send $\langle proceed(\mathit{d_h(C)}) \rangle$ or $\langle back \rangle$ to one of its neighboring component $\mathit{C'}$. For this, first of all the leader of $\mathit{C : CS(C) = inactive}$ computes its $\mathit{\epsilon_1(C)}$. Note that in this state of the network there cannot exist any component $\mathit{C'}$ in the  neighborhood of $\mathit{C}$ such that $\mathit{CS(C')=active}$. The inactive component $\mathit{C}$ finds its $\mathit{\epsilon_1(C)}$ only. The value of $\mathit{\epsilon_1(C)}$ may be some finite real number or $\infty$. If $\mathit{C}$ has  at least one neighboring component  $\mathit{C'}$  such that $\mathit{CS(C') = sleeping}$ or if $\mathit{CS(C') = inactive}$ for each neighboring component $\mathit{C'}$ of $\mathit{C}$ and $\mathit{\exists e :  e \in \delta(C) \wedge SE(e) = refind}$ then the value of $\mathit{\epsilon_1(C)}$  is guaranteed to be a finite real number. Otherwise the value of $\mathit{\epsilon_1(C) = \infty}$. If  $\mathit{\epsilon_1(C)}$ corresponding to the edge $\mathit{e : e \in \delta(C) \cap \delta(C')}$ is a finite real number then the component $\mathit{C}$ sends $\langle proceed(\mathit{d_h(C)}) \rangle$ to the component $\mathit{C'}$ through edge $e$. Upon receiving $\langle proceed(\mathit{d_h(C)}) \rangle$, the component $\mathit{C'}$ starts computing its $\mathit{\epsilon(C')}$ for taking further actions. If the value of $\mathit{\epsilon_1(C) = \infty}$ then the leader of the component $\mathit{C}$ sends $\langle back \rangle$ to a neighboring component $\mathit{C''}$ from which it received  $\langle proceed(\mathit{d_h(C'')}) \rangle$ in some early stages of the algorithm. Note that there may be more than one pending $\langle proceed \rangle$ on a component $\mathit{C}$ but a node has only one pending $\langle proceed \rangle$ at a time. In this case the leader of $\mathit{C}$ sends $\langle back \rangle$ to the {\em frontier} node which received the earliest $\langle proceed \rangle$. A {\em frontier} node remember the time at which it receives a $\langle proceed \rangle$ by using a local variable $\mathit{received\_ts}$. It is clear that the number of $\langle back \rangle$ generated by an $\mathit{inactive}$ component depends on the number of pending $\langle proceed \rangle$.  Eventually  when the leader  of the root component $\mathit{C_r}$ finds $\mathit{\epsilon_1(C_r) = \infty}$ then all components in the whole network become $\mathit{inactive}$. This ensures the termination of the {\em growth} phase. After the termination of the growth phase, the root node initiates the {\em pruning} phase. 

\medskip

\indent
{\bf Pruning phase}. In this phase following operations are performed. 
\begin{itemize}
	\item Each node $\mathit{v \in C}$, where $\mathit{C}$ is a non-root inactive component, sets $\mathit{SE(e) = basic}$ for each edge $\mathit{e \in \delta(v)}$ if $\mathit{SE(e) \neq basic}$.
	\item In the root component $\mathit{C_r}$, pruning starts in parallel at each leaf node of the Steiner tree rooted at the root node $r$ and repeatedly applied at every leaf node $v \in C_r$ at any stage as long as the following two conditions hold.
	\vspace{-.6em}
	\begin{enumerate}[(i)]
		\item $\mathit{labelled\_flag = TRUE}$ at node $v$.
		\item There exists exactly one edge $\mathit{e \in \delta(v)}$ such that $\mathit{SE(e) = branch}$. 
	\end{enumerate}	
Each pruned node $\mathit{v \in C_r}$ sets its local variables $\mathit{prize\_flag = TRUE}$, $\mathit{labelled\_flag = FALSE}$, and $\mathit{SE(e) = basic}$ for each edge $e$ such that $\mathit{e \in \delta(v) \wedge SE(e)= branch}$. Note that once a node $v$ changes the value of $\mathit{SE(e)}$ from $\mathit{branch}$ to $\mathit{basic}$ for an edge $e = (v, u) \in \delta(v)$ then the node $u$ also does the same. 
Finally for each of the non-pruned nodes $\mathit{u \in C_r}$, $\mathit{prize\_flag = FALSE}$ and there exists at least one edge $\mathit{e \in \delta(u)}$ such that $\mathit{SE(e) = branch}$.
\end{itemize}

\vspace{-2em} 
\subsection{An Example}\label{example}
\vspace{-.5em}
In this subsection we illustrate the working principle of our proposed D-PCST algorithm with an
example. Due to space constraints, we illustrate only the major operations. 

Figure~\ref{eg-merge} illustrates the merging of two components. Each node has a prize value that is labelled just outside the node. For example, the prize of node $v_1$ is $10$.
Similarly each edge is labelled with an weight. Figure~\ref{eg-merge}(a)  shows the graph before the merging of two neighboring components $\mathit{C=\{v_2, v_5\}}$ which is in $\mathit{active}$ state and $\mathit{C'=\{v_1\}}$ which is in $\mathit{sleeping}$ state. The MOE of the component $\mathit{C}$ is $(v_2, v_1)$ which gives $\mathit{\epsilon_1(C)} = -1$. The leader node $v_5$ also computes $\mathit{\epsilon_2(C)} = 6$. Hence $\mathit{\epsilon(C) = \min(\epsilon_1(C), \epsilon_2(C)) = \epsilon_1(C)}$. So  $v_2$ sends $\langle connect(v_2, 14, 7, 7) \rangle$ over the MOE to merge with $\mathit{C'}$. Now $\mathit{C'}$ becomes $\mathit{active}$ and finds $\mathit{\epsilon(C') = \epsilon_1(C')} = -1$ and $(v_1, v_2)$ is the MOE. Therefore it decides to merge with $C$. The new active component $\{v_1, v_2, v_3\}$ is shown in Figure~\ref{eg-merge}(b). The rectangular box below the graph shows the value of local variables $d_i$ and $W_i$ for each $v_i \in V$.

Figure~\ref{eg-deactivation} shows the deactivation of an active component $\mathit{C=\{v_7, v_{11}\}}$. In Figure~\ref{eg-deactivation}(a) the leader of $\mathit{C}$ finds that its MOE is $(v_7, v_3)$ which gives $\mathit{\epsilon_1(C)} = 7.5$. $\mathit{C}$ also computes its $\mathit{\epsilon_2(C)}$ which is equal to $3$. Since $\mathit{\epsilon(C) = \min(\epsilon_1(C), \epsilon_2(C)) = \epsilon_2(C)}$, therefore the component $\mathit{C}$ deactivates itself. Each node of $\mathit{C}$ sets its local boolean variable $\mathit{labelled\_flag = TRUE}$. The graph after the deactivation of $\mathit{C}$ is shown in Figure \ref{eg-deactivation}(b). 

Figure~\ref{eg-proceed} shows the action of {\em proceed} operation performed by an inactive component $\mathit{C=\{v_3\}}$. In Figure~\ref{eg-proceed}(a), the MOE of $\mathit{C}$ is $(v_3, v_4)$ which gives $\mathit{\epsilon_1(C)} = 10$.  The component $\mathit{C}$ sends $\langle proceed(\mathit{d_h(C)}) \rangle$ (denoted by $P(15)$ in the figure) over its MOE to the component $\mathit{C'=\{v_4\}}$. Upon receiving $P(15)$, the $\mathit{sleeping}$ component $\mathit{C}'$ becomes $\mathit{active}$ and initializes its local variables $d_4$ and $W_4$ to $15$. 
Since $\mathit{\epsilon(C') = \epsilon_1(C')}$, $\mathit{C'}$ sends a connection request to $\mathit{C}$ which is shown in Figure~\ref{eg-proceed}(b).

Figure~\ref{pruning-phase} shows the case of {\em pruning} operation performed in each component of the graph. In Figure~\ref{pruning-phase}(a), inside each of the non-root inactive components the state of each $\mathit{branch}$ edge changes to $\mathit{basic}$. In the root component $\mathit{C_r}$, nodes $v_7$ and $v_{11}$ are pruned. The component $\mathit{C = \{v_7, v_{11}\}}$ was deactivated at some early stage of the {\em growth} phase of the algorithm. At $v_{11}$ the local variable $prize\_flag$ is set to $\mathit{TRUE}$ and $\mathit{SE((v_{11}, v_7))}$ is set to $\mathit{basic}$. Similarly the node $v_7$ and its corresponding adjacent  $\mathit{branch}$ edges are also pruned from the root component. The Figure~\ref{pruning-phase}(b) shows the state of the graph after the {\em pruning} phase which is the final solution to the PCST.



\begin{figure}[!tp]
\vspace{-11em}
\hspace*{-2em}
  \begin{subfigure}[b]{0.5\textwidth}
    \includegraphics[width=10cm]{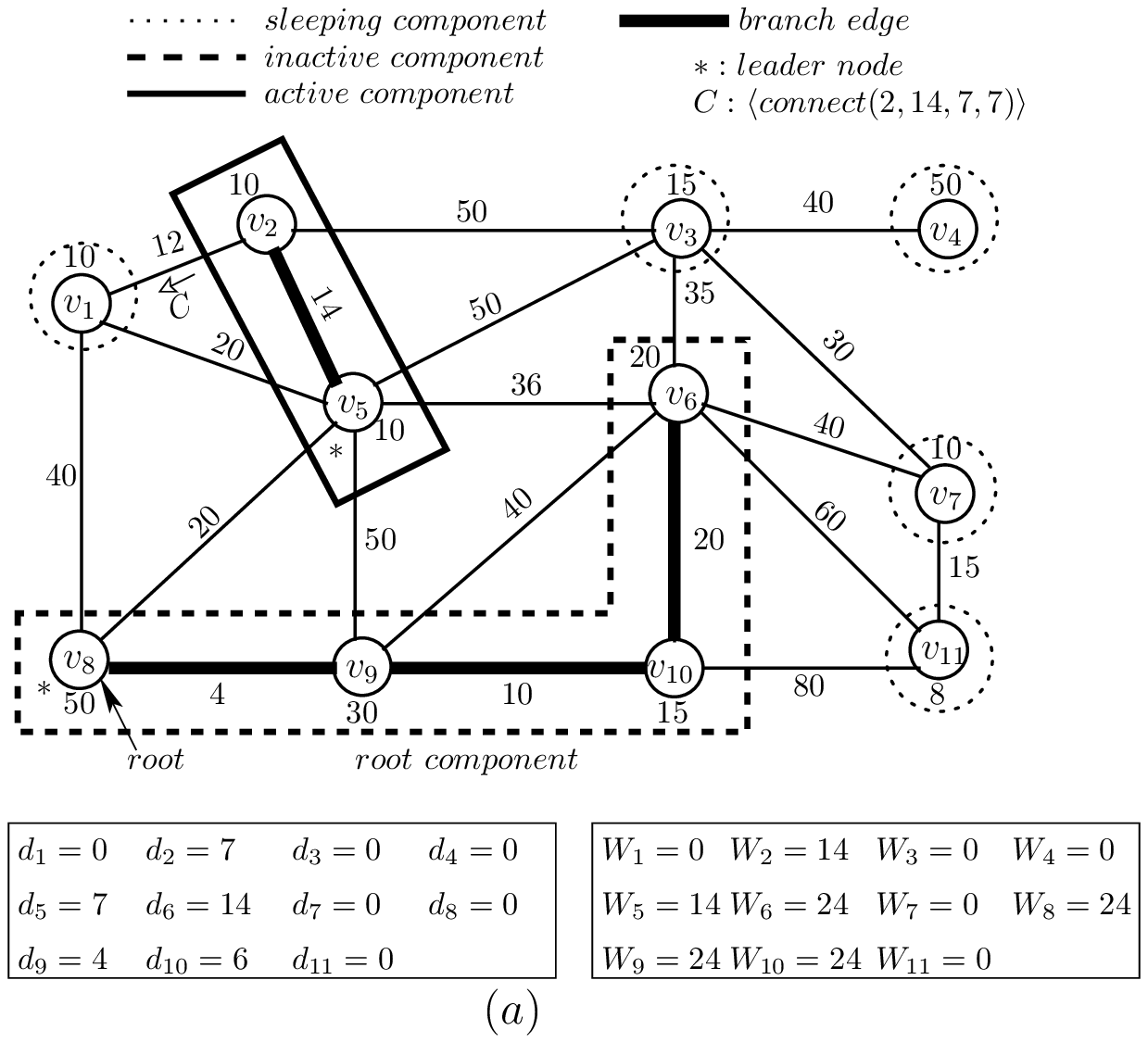}
    \vspace{-10em}
  \end{subfigure}
 \hspace*{-2em}
  \begin{subfigure}[b]{0.5\textwidth}
    \includegraphics[width=10cm]{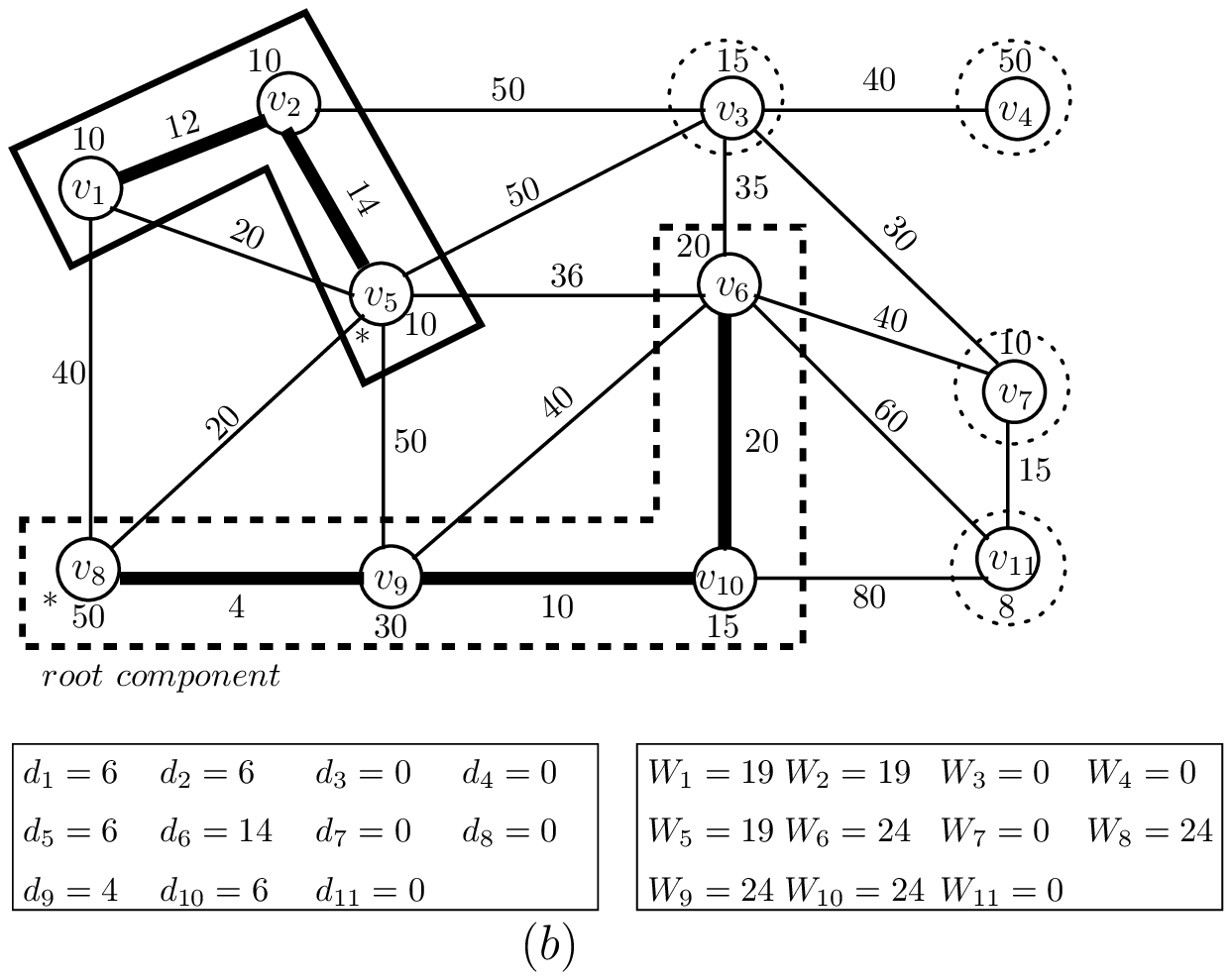}
    \vspace{-10em}
  \end{subfigure}
  \vspace{-2em}
  \captionsetup{font=small}
  \caption{A case of merging operation. (a) state before merging of components $\{v_2, v_5\}$ and $\{v_1\}$. (b) state after merging.} \label{eg-merge}
\end{figure}
\begin{figure}[!tp]
\vspace{-11em}
\hspace*{-2em}
  \begin{subfigure}[b]{0.5\textwidth}
    \includegraphics[width=10cm]{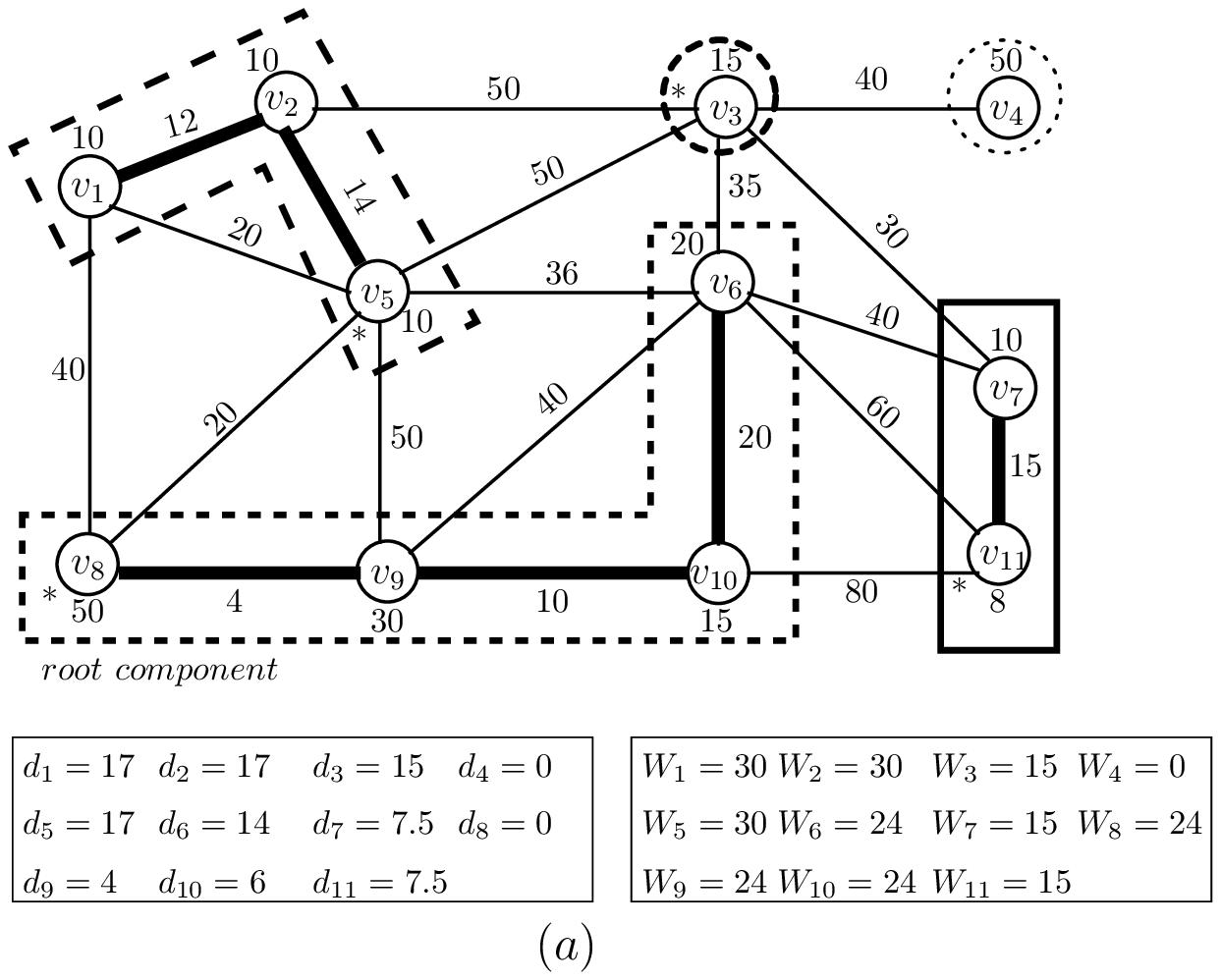}
    \vspace{-10em}
    \label{fig:f1}
  \end{subfigure}
 \hspace*{-3em}
  \begin{subfigure}[b]{0.5\textwidth}
    \includegraphics[width=10cm]{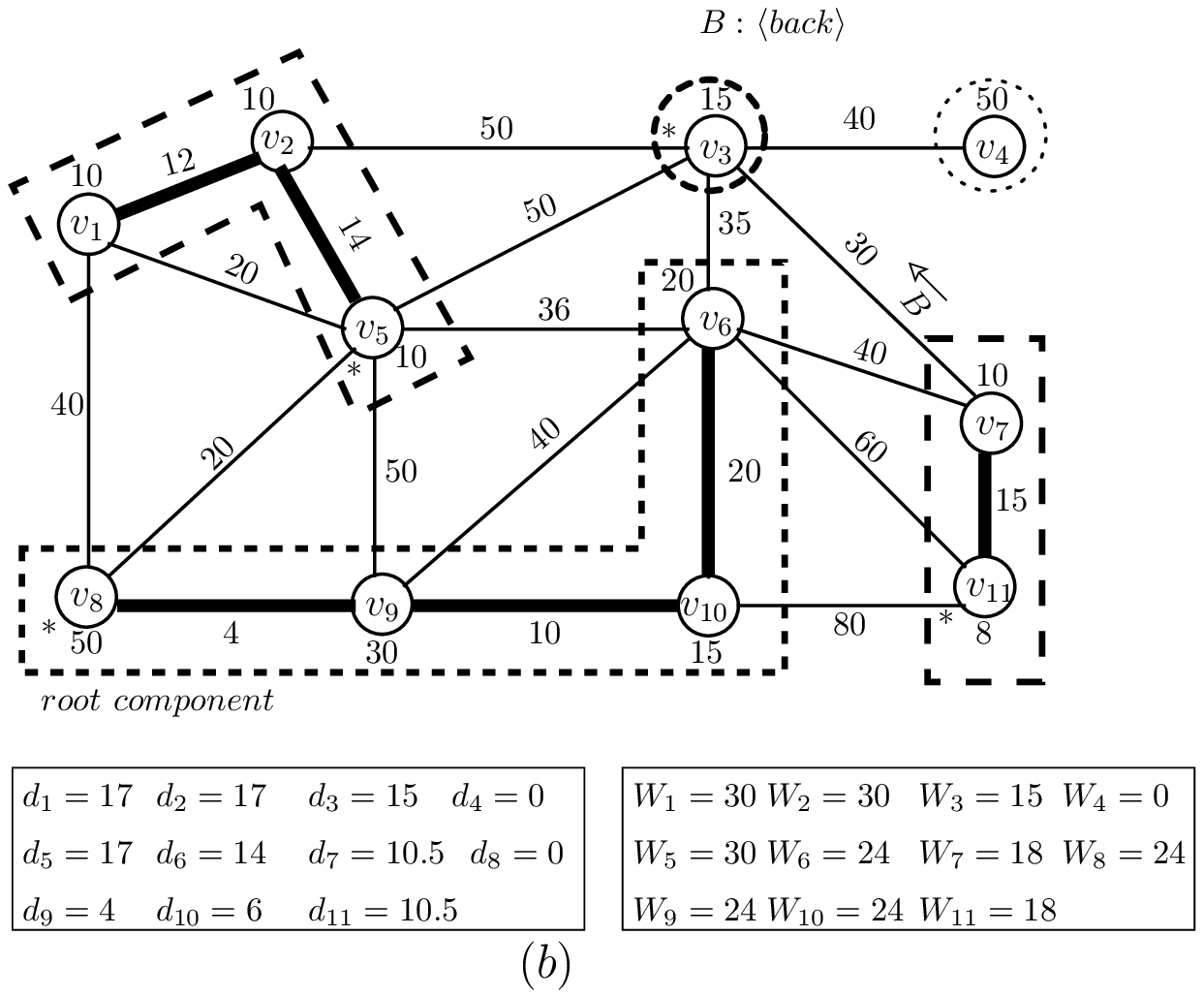}
    \vspace{-10em}
    \label{fig:f2}
  \end{subfigure}
  \vspace{-3em}
  \captionsetup{font=small}
  \caption{A case of deactivation. (a) state  before the deactivation of the active component $\{v_7, v_{11}\}$. (b) state  after the deactivation .} \label{eg-deactivation}
\end{figure}
\begin{figure}[!tp]
\vspace{-18em}
  \begin{subfigure}[b]{0.5\textwidth}
    \includegraphics[width=9.5cm]{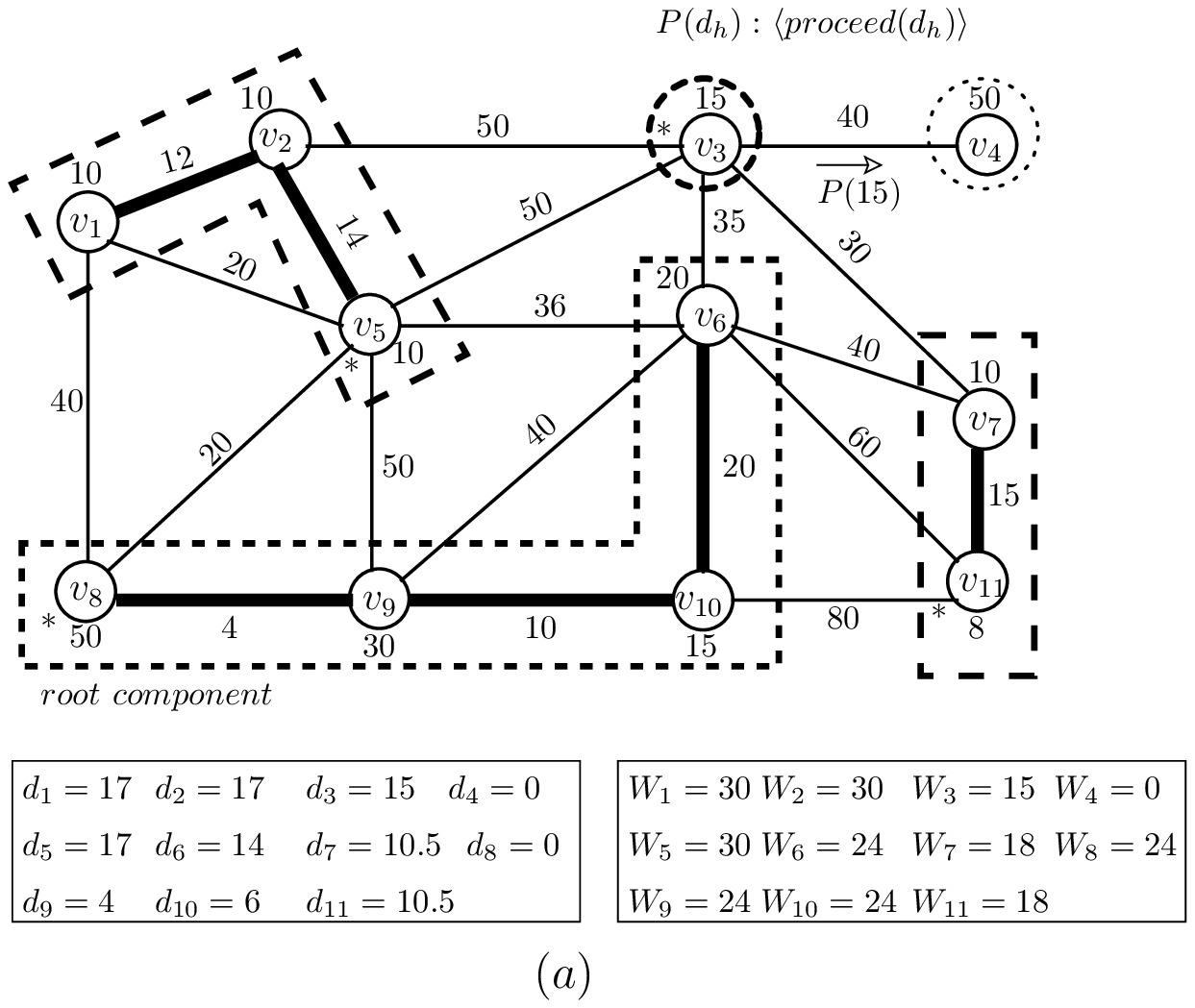}
    \vspace{-10em}
  \end{subfigure}
 \hspace*{-3em}
  \begin{subfigure}[b]{0.5\textwidth}
    \includegraphics[width=10cm]{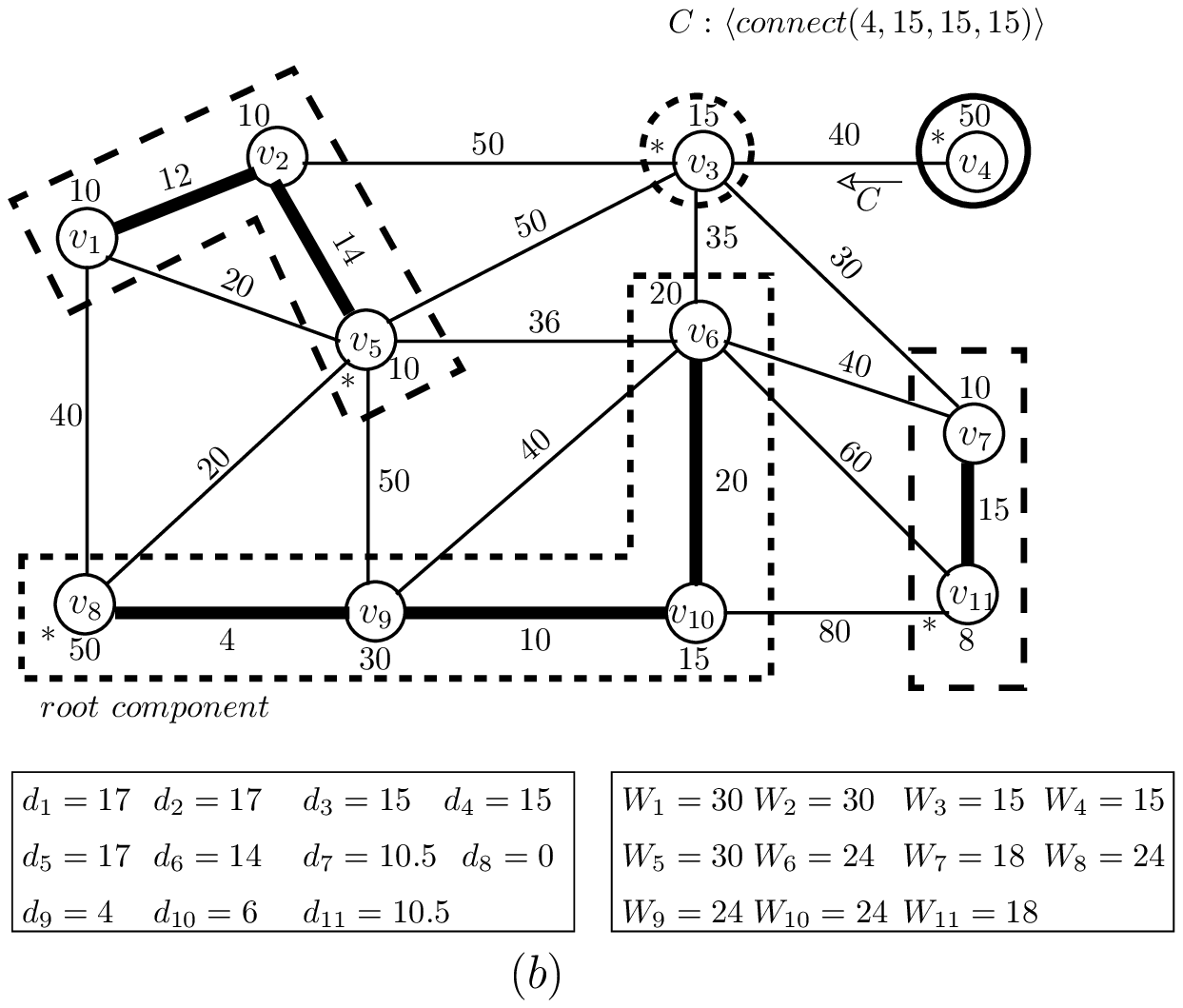}
    \vspace{-10em}
  \end{subfigure}
  \vspace{-2em}
   \captionsetup{font=small}
  \caption{A case of proceed operation. (a) state of sending $\langle proceed(15) \rangle$ by the inactive component $\{v_3\}$. (b) state after the component $\{v_4\}$ receives $\langle proceed(15) \rangle$.} \label{eg-proceed}
\end{figure}

\begin{figure}[!tp]
\vspace{-10.5em}
  \begin{subfigure}[b]{0.5\textwidth}
    \includegraphics[width=10cm]{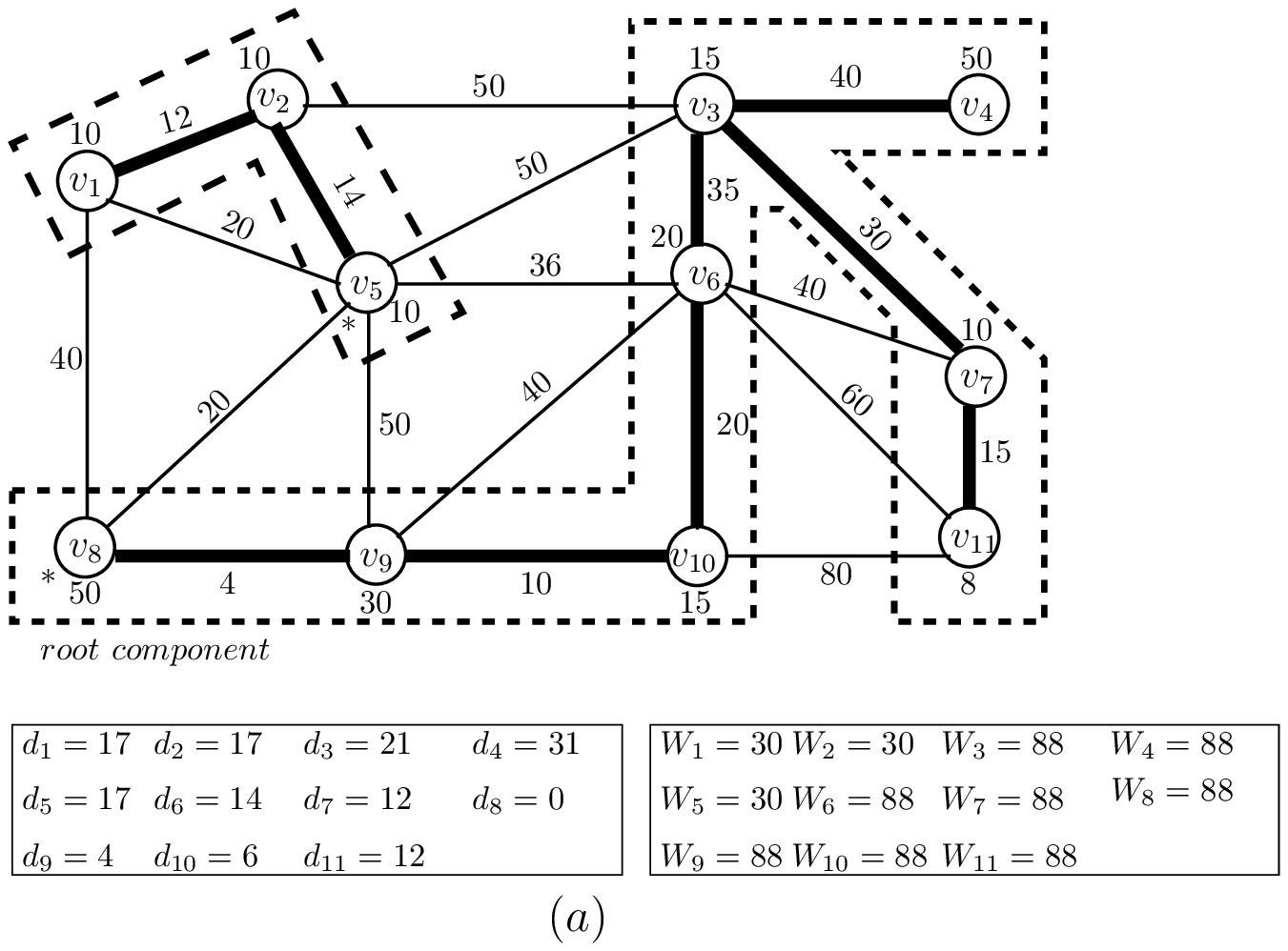}
    \vspace{-10em}
  \end{subfigure}
 \hspace*{-2em}
  \begin{subfigure}[b]{0.5\textwidth}
    \includegraphics[width=10cm]{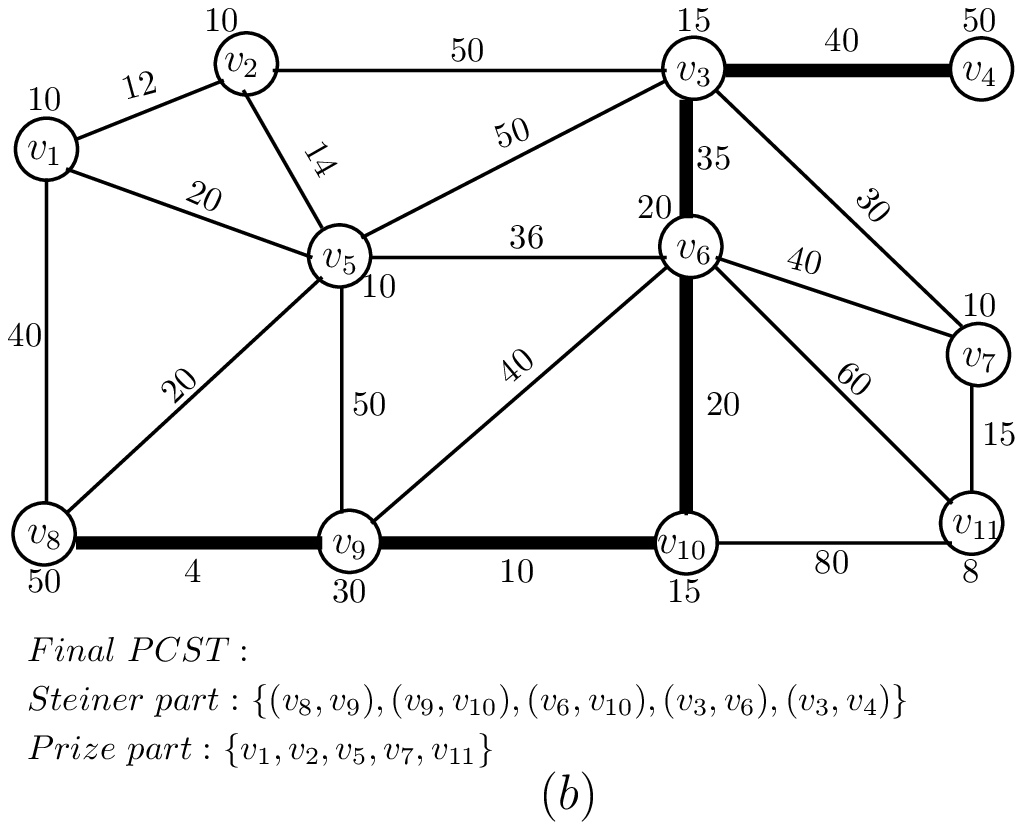}
    \vspace{-10em}
  \end{subfigure}
  \vspace{-3em}
  \captionsetup{font=small}
  \caption{A case of pruning phase. (a) state before pruning. (b) state after pruning phase which is the final solution to the PCST. The pruned components are $\{v_1, v_2, v_5\}$ and $\{v_7, v_{11}\}$.} \label{pruning-phase}
\end{figure}

\section{Proof of Correctness} \label{proof-of-correctness}

\subsection{Termination}
A round of $proc\_initiate()$ in a component $\mathit{C}$ means the time from the beginning of the execution of  $proc\_initiate()$ till the completion of finding  $\epsilon(\mathit{C})$. By an action of an event $A$ we mean the start of event $A$. 
\begin{lemma}
\label{lemma0}
A round of  $proc\_initiate()$ generates at most $6|V| + 2|E| - 4$  messages.
\end{lemma}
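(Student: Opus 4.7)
The plan is to analyse which messages are sent during a single invocation of $proc\_initiate()$ in a component $\mathit{C}$ and then bound each category separately using $|C|\le |V|$ and $|\delta(C)|\le |E|$. From the high-level description of the growth phase, one round of $proc\_initiate()$ produces three kinds of traffic: (i) a broadcast of $\langle initiate\rangle$ sent by the leader along the branch edges that form the spanning tree inside $\mathit{C}$; (ii) per-edge exchanges on each frontier edge $e\in\delta(\mathit{C})$ so that a frontier node $v\in \mathit{C}$ can learn both the state of the neighbouring component and the deficit of the adjacent endpoint, as required by the case analysis for $\mathit{\epsilon_e}$; and (iii) a convergecast of $\langle report\rangle$ travelling back along the same internal branch edges and carrying the partial minimum of $\mathit{\epsilon_1}$ together with the accumulated subtree prize that lets the leader evaluate $\mathit{\epsilon_2(C)}$.

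\textbf{Counting.} I would first bound the internal-tree traffic. Since the branch edges inside $\mathit{C}$ form a spanning tree on $|C|$ vertices, they number exactly $|C|-1$, and each broadcast or convergecast pass uses each such edge at most once. Inspecting the pseudocode one can identify a fixed small number of passes over the internal tree per invocation of $proc\_initiate()$, namely the $\langle initiate\rangle$ broadcast, the $\langle report\rangle$ convergecast, and the auxiliary broadcast/convergecast pairs used to disseminate the leader's subsequent decision and to update $d_v$ and $W(\mathit{C})$ at every node. Writing this number as $c$, the internal traffic is at most $c(|C|-1)\le c(|V|-1)$. Next I would bound the frontier-edge traffic: each edge $e\in\delta(\mathit{C})$ carries at most two messages per round (a probe in one direction and a reply in the other), contributing at most $2|\delta(\mathit{C})|\le 2|E|$ messages. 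Summing the contributions and plugging in $c=6$ as dictated by the pseudocode yields the total bound $6(|V|-1)+2|E|+2 = 6|V|+2|E|-4$.

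\textbf{Main obstacle.} The principal difficulty is pinning down the exact constant $6$ rather than some other small integer. The narrative in Section~\ref{description-D-PCST} makes only the outermost $\langle initiate\rangle$ broadcast and $\langle report\rangle$ convergecast explicit, so the remaining internal passes (contributing a coefficient of $4$ to $|V|$) have to be extracted from the pseudocode in the appendix. I would therefore walk through that pseudocode line by line, tagging every edge transmission that can occur during a single call to $proc\_initiate()$, classifying each as either an internal-tree message or a frontier-edge message, and verifying that no branch edge carries more than six messages and no frontier edge more than two. The small additive offset $-4$ comes from the facts that the leader has no parent in the internal tree (so a handful of passes save one message each) and that the probes on frontier edges go out in only one of the two possible directions per round; I would account for these boundary effects at the very end of the count to match the stated bound exactly.
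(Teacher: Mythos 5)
Your overall strategy matches the paper's: classify the messages generated in one round of $proc\_initiate()$, bound each tree-borne message type by $|V|-1$ and each edge-probe type by $|E|$, and sum. However, as written your count does not close, and the discrepancy points to two groups of messages you have not accounted for. Your tally is $c(|V|-1)+2|\delta(C)| \le 6(|V|-1)+2|E| = 6|V|+2|E|-6$, yet you then assert the total $6(|V|-1)+2|E|+2$; the extra $+2$ is never derived. In the paper that $+2$ comes from the $\langle connect \rangle$ message sent over the chosen MOE together with its single response ($\langle accept \rangle$ or $\langle refind\_epsilon \rangle$) -- a handshake your proposal never mentions. Relatedly, your closing explanation of the additive offset (the leader having no parent, probes going in one direction) is not where the $-4$ comes from; it is simply the arithmetic $-6+2$.

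The second misattribution is in the coefficient $6$. Only four of the six $(|V|-1)$ terms are internal-tree broadcast/convergecast passes ($\langle initiate \rangle$, $\langle report \rangle$, $\langle merge \rangle$, $\langle update\_info \rangle$). The remaining two are the $\langle proceed \rangle$ and $\langle back \rangle$ messages, which are not passes over the component's spanning tree at all: they are relayed hop by hop from the leader of one component to the leader of a neighbouring component, and are bounded by $|V|-1$ because that is the maximum length of the relay path. Treating all six as spanning-tree passes over $C$ would make it unclear why a round that ends in a $\langle proceed \rangle$ or $\langle back \rangle$ still fits the same budget. To repair the proof you need to (i) add the $\langle proceed \rangle$ and $\langle back \rangle$ relays as two further $(|V|-1)$ terms distinct from the tree passes, and (ii) add the two-message connect/response exchange, which together yield $4(|V|-1)+2(|V|-1)+2|E|+2 = 6|V|+2|E|-4$ exactly as claimed.
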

\begin{proof}
In our D-PCST algorithm in each round of $proc\_initiate()$, the following messages are possibly generated: $\langle initiate \rangle$, $\langle test \rangle$, $\langle status \rangle$, $\langle reject \rangle$, $\langle report \rangle$, $\langle merge \rangle$, $\langle connect \rangle$,
$\langle update\_info \rangle$, $\langle back \rangle$, $\langle proceed \rangle$, $\langle accept \rangle$, and $\langle refind\_epsilon \rangle$. Since maximum number of $\mathit{branch}$ edges in a component is at most $|V| - 1$, therefore at most $|V| -1$ number of messages are exchanged for each kind of $\langle initiate \rangle$, $\langle report \rangle$, $\langle merge \rangle$, and $\langle update\_info \rangle$ in each round of  $proc\_initiate()$.
Similarly in each round of $proc\_initiate()$, at most $|E|$ number of $\langle test \rangle$ are sent and in response at most $|E|$ number of ($\langle status \rangle$ or $\langle reject \rangle$) messages are generated. The $\langle proceed \rangle$ and the $\langle back \rangle$  are exchanged in between the leaders of two different components. Therefore, for of each kind of $\langle proceed \rangle$ and $\langle back \rangle$ in the worst case at most $|V| - 1$ number of messages are communicated in each round of $proc\_initiate()$. For each $\langle connect \rangle$, either an $\langle accept \rangle$ or a $\langle refind\_epsilon \rangle$ is generated. Therefore in each round of $proc\_initiate()$, at most any two of the combinations of $\{\langle connect \rangle$, $\langle accept \rangle$, $\langle refind\_epsilon \rangle\}$ are generated. Therefore number of messages exchanged in each round of $proc\_initiate()$ is at most $4(|V| - 1)+ 2|E| + 2(|V| - 1) + 2 =6|V| + 2|E| - 4$.
\end{proof}

\begin{lemma}
\label{lemma1}
If the leader of an $\mathit{inactive}$ component $\mathit{C}$ finds $\mathit{\epsilon_1(C) = \infty}$ then for each neighboring component $\mathit{C_k}$ of $\mathit{C}$, $\mathit{CS(C_k) = inactive}$.
\end{lemma}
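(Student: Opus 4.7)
\bigskip
\noindent
\textbf{Proof proposal.} The plan is to argue by contradiction: assume that some neighboring component $C_k$ of $C$ has $CS(C_k) \neq \mathit{inactive}$ and exhibit a finite value $\epsilon_e$ on an edge in $\delta(C)\cap\delta(C_k)$, which would contradict $\epsilon_1(C)=\infty$. The two remaining possibilities for $CS(C_k)$ are $\mathit{active}$ and $\mathit{sleeping}$, and both must be ruled out.

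First I would recall that $C$ being $\mathit{inactive}$ while computing $\epsilon_1(C)$ is precisely the situation described in case (iv) and case (v) of the $\epsilon_e$ calculation rules. In particular, the paper explicitly states the invariant: \emph{whenever an inactive component is in the state of computing its $\epsilon_1(C)$, no component in its neighborhood can be active}. Invoking this invariant immediately eliminates the case $CS(C_k)=\mathit{active}$.

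Next I would handle the case $CS(C_k)=\mathit{sleeping}$. Since $C_k$ is a neighbor of $C$, there exists at least one edge $e=(v,u)$ with $v\in C$ and $u\in C_k$, hence $e\in\delta(v)\cap\delta(C)$. By case (iv) of the $\epsilon_e$ rules, node $v$ computes $\epsilon_e = w_e - d_v - d_u$ where $d_u$ is taken to be $d_h(C)$. Because $w_e$ is a finite non-negative real and the deficits are finite reals satisfying $d_v+d_u\leq w_e$ (the invariant maintained by the algorithm), we get a finite real $\epsilon_e$. But then $\epsilon_1(v)\leq \epsilon_e<\infty$ and therefore $\epsilon_1(C)\leq \epsilon_1(v)<\infty$, contradicting the hypothesis $\epsilon_1(C)=\infty$.

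The only remaining state is $\mathit{inactive}$, proving the claim. The main obstacle I anticipate is not the logical structure of the argument itself but rather making sure that the $\epsilon_e$ values cited from the case analysis are actually the values used by $C$'s leader in this exact execution state; this is why I first invoke the stated invariant (no active neighbor during an inactive component's computation) and then pin down the $\mathit{sleeping}$ case against the specific rule (iv). Once those two observations are in place, the contradiction is immediate and no further computation is needed.
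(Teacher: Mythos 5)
Your proposal is correct and follows essentially the same route as the paper's own proof: a contradiction argument that first invokes the invariant (Claim~\ref{claim-1}) to rule out an $\mathit{active}$ neighbor, and then uses the $\epsilon_e$ rule for an $\mathit{inactive}$ component facing a $\mathit{sleeping}$ neighbor to produce a finite $\epsilon_e$, hence a finite $\epsilon_1(C)$. The only difference is presentational: the paper spells out the $\langle initiate \rangle$/$\langle test \rangle$/$\langle status \rangle$ message exchange by which the frontier node actually computes that $\epsilon_e$ and convergecasts it to the leader, whereas you cite the case rule directly, which is an acceptable abstraction of the same argument.
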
 

\begin{proof}
Suppose by contradiction the leader of $\mathit{C}$ finds $\mathit{\epsilon_1(C) = \infty}$ and there exists a neighboring component $\mathit{C_k}$ of $\mathit{C}$ such that $\mathit{CS(C_k) \neq inactive}$. Therefore either 
$\mathit{CS(C_k) = sleeping}$ or $\mathit{CS(C_k) = active}$. Since it is given that the leader of the component $\mathit{C}$ is in a state of finding its $\mathit{\epsilon_1(C)}$, and $\mathit{CS(C) = inactive}$ therefore Claim~\ref{claim-1} ensures that $\mathit{CS(C_k)\neq active}$ for each neighboring component $\mathit{C_k}$ of $\mathit{C}$.

Consider the case of $\mathit{CS(C_k) = sleeping}$. To find $\mathit{\epsilon_1(C)}$ the leader starts the procedure $proc\_initiate()$ which in turn sends $\langle initiate \rangle$ on each of its $\mathit{branch}$ edges in the component $\mathit{C}$. Upon receiving $\langle initiate \rangle$ each node $\mathit{v \in C}$ forwards it on its outbound $\mathit{branch}$ edges and if $v$ is a  {\em frontier} node  then it also sends $\langle test \rangle$ on each edge $e$ if state of $e$ is neither $\mathit{branch}$ nor $\mathit{rejected}$.    
In response to each $\langle test \rangle$, $v$ either receives $\langle status \mathit{(CS(C_k), d_u)} \rangle$ on an edge $e$ from a node $\mathit{u \in C_k \neq C}$ if $\mathit{C_k}$ is a neighboring component of $\mathit{C}$ or the node $v$ receives $\langle reject \rangle$  if $e = (v, u)$ such that $\mathit{v, u \in C}$. The $\langle reject \rangle$  is simply discarded by the node $v$. In case of $\langle status \mathit{(CS(C_k), d_u)} \rangle$, the node $v$ calculates the $\mathit{\epsilon_e}$ for edge $e$. Since $\mathit{CS(C_k) = sleeping}$ therefore the {\em frontier} node $v$ computes $\mathit{\epsilon_e = w_e - d_v - d_u}$. In this case the computed value $\mathit{\epsilon_e}$ is a finite real number, since $w_e , d_v$ and $d_u$ are finite real numbers. Eventually the value of each $\mathit{\epsilon_1(v)}$ computed by each {\em frontier} node $\mathit{v \in C}$ reaches to the leader of $\mathit{C}$. And the leader of $\mathit{C}$ finds the value of $\mathit{\epsilon_1(C)}$  to be a finite real number, a contradiction to the fact that $\mathit{\epsilon_1(C)=\infty}$. This completes the proof.     
\end{proof}

\begin{claim}
\label{claim-2}
D-PCST algorithm generates the action of $\langle proceed \rangle$ at most $|V| - 1$ times.
\end{claim}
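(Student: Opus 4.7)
The plan is to prove Claim 2 by setting up an injective charging scheme that maps each proceed action to a distinct non-root node of $V$, thereby yielding the bound $|V|-1$. The starting observation is that every proceed action is emitted by an inactive component $C$ over an edge $e\in\delta(C)\cap\delta(C')$ on which $\epsilon_1(C)$ is finite. Consulting the growth-phase description of when an inactive component can have a finite $\epsilon_1(C)$, this splits the analysis into exactly two cases: either $CS(C')=\mathit{sleeping}$, or every neighbor of $C$ is inactive and $e$ satisfies $SE(e)=\mathit{refind}$ at $C$'s endpoint.

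In the first case (sleeping target), $C'$ is necessarily a singleton $\{u\}$, and the rule for handling $\langle proceed\rangle$ at $C'$ transitions $u$ out of $\mathit{sleeping}$ permanently (either to $\mathit{active}$ or, via the subsequent refind path, to $\mathit{inactive}$). Since only non-root nodes are ever in $\mathit{sleeping}$ state and such a transition happens at most once, I would charge each proceed of this type to its target node $u$, giving at most $|V|-1$ contributions from this case.

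In the second case (refind-edge target), I would charge the proceed to the node $u'$ whose earlier wake-and-deactivate behavior produced the refind edge: a refind edge is created only when a sleeping singleton $\{u'\}$ receives a $\langle connect\rangle$, wakes up, computes $\epsilon_2(C')\le\epsilon_e$, and responds with $\langle refind\_epsilon\rangle$. Each refind edge is therefore in one-to-one correspondence with a distinct non-root node $u'$ that exited the sleeping state, and such an exit again occurs at most once per node. Combining the two cases, every proceed action is injectively associated with a distinct node in $V\setminus\{r\}$, yielding the claimed bound.

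The main obstacle will be the potential reuse of a single refind edge by multiple Case-B proceeds; this would break the injection. I would handle it by a careful case analysis of what happens after the target component processes a proceed received along a refind edge: it either (i) merges across $e$, whereupon $SE(e)$ becomes $\mathit{branch}$ at both endpoints and $e$ is no longer eligible for future refind-based $\epsilon_1$ computations, or (ii) returns a $\langle back\rangle$, after which I would argue---using Lemma~\ref{lemma1} and the management of the $received\_ts$ variable together with the rule that an inactive leader only chooses an edge for proceed when its $\epsilon_e$ gives $\epsilon_1(C)$---that the same refind edge cannot be selected again as the minimizer without some strictly new event (e.g., the creation of a fresh refind edge, which by the charging above is itself tied to another distinct node). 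A subsidiary tidying step will be to confirm that charging is disjoint between Cases A and B, since a node $u$ that transitions out of sleeping via connect-then-refind is charged in Case B (via the refind edge it spawned) and therefore not in Case A.
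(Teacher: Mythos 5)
Your charging scheme is a genuinely different argument from the one in the paper. The paper's proof is a two-line counting argument on \emph{components}: it asserts that a component receives $\langle proceed \rangle$ at most once, notes that there are at most $|V|$ components and that the root component never receives one, and concludes. You instead build an injection from proceed actions to non-root \emph{nodes}, splitting on whether the proceed is aimed at a sleeping component (charge the target node, which leaves the sleeping state permanently) or travels along a $\mathit{refind}$ edge (charge the unique node whose connect-then-deactivate response created that edge). Your case split is correct -- an inactive component has finite $\epsilon_1(C)$ only via a sleeping neighbour or a $\mathit{refind}$ edge -- and the disjointness of the two charge classes (a node exits the sleeping state exactly once, either via $\langle proceed \rangle$ or via $\langle connect \rangle$, and an inactive node never emits $\langle refind\_epsilon \rangle$, so each node spawns at most one $\mathit{refind}$ edge) is handled correctly. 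Arguably your version is more robust than the paper's, since it avoids the dynamically slippery question of what it means for ``the same component'' to receive twice after merges.

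The one step you leave open -- that a single $\mathit{refind}$ edge cannot serve as the minimizer for two different proceeds -- is a real obligation, but your proposed repair is off target. Your sub-case (i) (``the target merges across $e$'') does not occur: an inactive component reacting to a $\langle proceed \rangle$ only computes $\epsilon_1$ and then emits $\langle proceed \rangle$ or $\langle back \rangle$; it never initiates a merge. And the $\mathit{received\_ts}$/Lemma~\ref{lemma1} route in sub-case (ii) is unnecessary. The actual resolution is a one-line observation from the pseudocode: immediately after sending $\langle proceed \rangle$ on $\mathit{best\_edge}$, the sender executes ``if $SE(\mathit{best\_edge}) = \mathit{refind}$ then $SE(\mathit{best\_edge}) \gets \mathit{basic}$,'' so a $\mathit{refind}$ edge is consumed the moment a proceed is sent over it and can never again yield a finite $\epsilon_e$ between two inactive components. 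With that substitution your injection closes and yields the bound $|V|-1$.
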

\begin{proof}
During the execution of D-PCST algorithm an $\mathit{inactive}$ component may send $\langle proceed \rangle$ more than once to different components, however a component receives $\langle proceed \rangle$ at most once. In addition the root component is always $\mathit{inactive}$ and never receives $\langle proceed \rangle$. Since there are at most $|V|$ components and root component never receives $\langle proceed \rangle$, therefore at most $|V| - 1$ number of $\langle proceed \rangle$ is generated.
\end{proof}

\begin{lemma}
\label{lemma2}
 D-PCST algorithm generates the action of $\langle back \rangle$ at most $|V| - 1$ times.
\end{lemma}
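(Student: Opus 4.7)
The plan is to set up an injection from the set of $\langle back \rangle$ actions generated during the execution of D-PCST into the set of $\langle proceed \rangle$ actions, and then invoke Claim~\ref{claim-2}.

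First I would argue, from the description of the growth phase, that a $\langle back \rangle$ is emitted only as a response to a previously received and still-pending $\langle proceed \rangle$. Concretely, the leader of an $\mathit{inactive}$ component $\mathit{C}$ decides to issue $\langle back \rangle$ exactly when, after computing $\epsilon_1(\mathit{C})$, it finds $\epsilon_1(\mathit{C}) = \infty$; by Lemma~\ref{lemma1} this can only happen when every neighbouring component is $\mathit{inactive}$, hence no merge or deactivation can clear the situation, and the only thing $\mathit{C}$ can do to make progress is to return control to some upstream component. By the rule stated in the algorithm, the leader picks the \emph{frontier} node whose local variable $\mathit{received\_ts}$ records the earliest among the pending $\langle proceed \rangle$ notifications on $\mathit{C}$, and sends the $\langle back \rangle$ along the corresponding edge to that earlier-arrived $\langle proceed \rangle$'s sender.

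Next I would define the injection explicitly: to each action of $\langle back \rangle$ associate the unique $\langle proceed \rangle$ whose pending state is cleared by it (i.e.\ the one with the smallest $\mathit{received\_ts}$ at the moment the leader decides to send the $\langle back \rangle$). Two observations make this well defined and injective. First, as soon as the $\langle back \rangle$ is emitted, the corresponding $\langle proceed \rangle$ is no longer pending, so it cannot be chosen again by a later $\langle back \rangle$ action from the same or from any larger (merged) component. Second, component merging does not create duplicate correspondences: when two components merge, their respective lists of pending $\langle proceed \rangle$ notifications are simply concatenated, and each surviving pending $\langle proceed \rangle$ was originally received exactly once by exactly one frontier node. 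Thus distinct $\langle back \rangle$ actions are mapped to distinct $\langle proceed \rangle$ actions.

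Finally, combining this injection with Claim~\ref{claim-2}, which bounds the number of $\langle proceed \rangle$ actions by $|V|-1$, immediately yields that the number of $\langle back \rangle$ actions is at most $|V|-1$. The main technical point to get right, and the step I expect to spend the most care on, is the second observation above: one has to verify that the bookkeeping of pending $\langle proceed \rangle$ messages is preserved across merges and deactivations so that every pending $\langle proceed \rangle$ is consumed by at most one $\langle back \rangle$; once that invariant is spelled out, the counting argument is immediate.
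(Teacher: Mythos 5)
Your proposal is correct and follows essentially the same route as the paper's own proof: the paper likewise observes that each $\langle back \rangle$ answers a distinct pending $\langle proceed \rangle$ and then invokes Claim~\ref{claim-2} to conclude the $|V|-1$ bound. Your version merely makes the injection explicit and is more careful about why it survives component merges, which the paper states only in passing (``each $\langle proceed \rangle$ generates the action of $\langle back \rangle$ at most once'').
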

\begin{proof}
	A non-root component $\mathit{C}$ decides to take the action of $\langle back \rangle$ only if $\mathit{CS(C) = inactive}$ and it finds its $\mathit{\epsilon(C) = \infty}$.  Lemma~\ref{lemma1} proves that if $\mathit{CS(C) = inactive}$ and $\mathit{\epsilon_1(C) = \infty}$ then $\mathit{CS(C_k) = inactive}$ for each neighboring component $\mathit{C_k}$ of $\mathit{C}$. In addition, $\mathit{SE(e) \neq refind}$ for each edge $\mathit{e \in \delta(C)}$. These facts indicate that all neighboring components of $\mathit{C}$ are explored and therefore action of $\langle back \rangle$ is taken by the leader node in search of a component whose state is still $\mathit{sleeping}$. The component $\mathit{C}$ sends $\langle back \rangle$ to a neighboring component say $\mathit{C'}$ which sent $\langle proceed \rangle$ to $\mathit{C}$ or to a subcomponent of $C$ in some early stages of the algorithm. Since each $\langle proceed \rangle$ generates the action of $\langle back \rangle$ at most once and by Claim~\ref{claim-2} D-PCST algorithm generates the action of $\langle proceed \rangle$ at most $|V| - 1$, therefore D-PCST algorithm generates the action of $\langle back \rangle$ at most $|V| - 1$ times. 
\end{proof}

\begin{claim}
\label{claim-1}
 When an inactive component $\mathit{C}$ is in the state of computing its $\mathit{\epsilon_1(C)}$ then there cannot exist any component $\mathit{C'}$ in the neighborhood of $\mathit{C}$ such that $\mathit{CS(C')=active}$.
\end{claim}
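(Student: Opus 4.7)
The plan is to prove the stronger statement that whenever an inactive component $C$ is in the state of computing $\epsilon_1(C)$, the network contains no active component at all; since this implies that no neighbor of $C$ is active, the claim follows immediately. The argument proceeds by induction on the chronological sequence of $\epsilon_1$-computations performed by inactive components.

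The first step is to catalogue the precise events that place an inactive component $C$ into the state of computing $\epsilon_1(C)$. Inspecting the algorithm, there are three: (i) $C$ has just transitioned into the inactive state, either by self-deactivation upon its leader finding $\epsilon(C)=\epsilon_2(C)$ or by merging with the inactive root component; (ii) the leader of $C$ has just received $\langle back \rangle$ from a neighboring component to which $C$ had earlier sent $\langle proceed \rangle$; or (iii) the leader has just received $\langle proceed \rangle$ from an inactive neighbor along a refind edge. In each of these situations $C$ effectively acquires the unique control of the distributed computation just before broadcasting $\langle initiate \rangle$ inside its own spanning tree.

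The base case of the induction is immediate: when the root first executes $proc\_initiate()$, every other component is sleeping, so no active component exists. For the inductive step, in case (i) the component $C$ was, by the inductive hypothesis applied at the last instant at which an active component held control, the only active component in the network, and its transition to inactive therefore leaves zero active components. In cases (ii) and (iii) the triggering message was sent by an inactive component $C'$ that had itself just completed an $\epsilon_1$-computation; by the inductive hypothesis, no active component existed at the moment $C'$ dispatched the message. Between that instant and the arrival of the message at $C$, no sleeping component could spontaneously turn active, since every sleeping-to-active transition in D-PCST is triggered by a $\langle proceed \rangle$ or $\langle connect \rangle$ issued by whichever component currently holds control---and control is, by assumption, in transit on the single edge from $C'$ to $C$.

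The step that I expect to require the most care is handling the brief window of a merge protocol in which an active component $\tilde C$ sends $\langle connect \rangle$ to a sleeping neighbor $\tilde C'$ and $\tilde C'$ temporarily becomes active before either confirming the merge or deactivating. In this window two components are active simultaneously, in apparent tension with the invariant. To reconcile this, I would argue that throughout such a protocol control is confined to the pair $\{\tilde C,\tilde C'\}$, so no inactive component can concurrently be executing $proc\_initiate()$; hence these transient two-active configurations never coincide with the instants to which the claim applies, and the inductive argument remains valid.
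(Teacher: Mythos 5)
A preliminary but important observation: the paper itself offers no proof of Claim~\ref{claim-1}. The statement is asserted as a bulleted ``note'' in the algorithm description of Section~\ref{description-D-PCST} and then restated as a bare claim in Section~\ref{proof-of-correctness}, where it is invoked by Lemma~\ref{lemma1} and Claim~\ref{claim-3} but never argued. So there is no authorial proof to compare yours against; your proposal has to be judged on its own terms.

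Your overall strategy --- proving the stronger invariant that no active component exists anywhere in the network whenever an inactive component computes $\epsilon_1$, by tracking the single locus of control --- is the right one, and your catalogue of the three events that put an inactive component into the $\epsilon_1$-computing state is accurate. The genuine gap is in the logical setup of the induction. You induct only over the chronological sequence of $\epsilon_1$-computations performed by \emph{inactive} components, yet in case (i) you invoke ``the inductive hypothesis applied at the last instant at which an active component held control'' to conclude that the deactivating component was the unique active one. That instant is not among the events your induction quantifies over, so the hypothesis you need there is simply not available; as written the argument is circular at exactly the point where deactivation is handled. What is actually required is a joint invariant carried over \emph{all} control-holding events (an active component computing $\epsilon(C)$, an inactive component computing $\epsilon_1(C)$, and a control-transferring message in flight): at every moment at most one component is active, except during the window between a $\langle connect \rangle$ to a sleeping node and its $\langle accept \rangle$ or $\langle refind\_epsilon \rangle$ reply, in which the only second active component is the addressee itself; and whenever the control holder is inactive, the number of active components is zero. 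You clearly see the transient two-active window and propose the right way to quarantine it, so the repair is mechanical, but the induction as stated does not close.
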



\begin{claim}
\label{claim-3}
If none of the four consecutive rounds of $proc\_initiate()$ initiates the action of sending $\langle back \rangle$ then any one of the following two events is guaranteed to happen: (i) sum of the number of components decreases (ii) one of the sleeping or active components decreases.
\end{claim}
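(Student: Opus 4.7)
The plan is to proceed by case analysis on the action that ends each of the four consecutive rounds of $proc\_initiate()$. By the description of the algorithm, a round at a component $\mathit{C}$ ends with exactly one of the following actions: (a) $\mathit{C}$ initiates a merge (active $\mathit{C}$ with $\epsilon(\mathit{C}) = \epsilon_1(\mathit{C})$), (b) $\mathit{C}$ deactivates (active $\mathit{C}$ with $\epsilon(\mathit{C}) = \epsilon_2(\mathit{C})$), (c) $\mathit{C}$ sends $\langle proceed \rangle$ (inactive $\mathit{C}$ with finite $\epsilon_1(\mathit{C})$), or (d) $\mathit{C}$ sends $\langle back \rangle$ (inactive $\mathit{C}$ with $\epsilon_1(\mathit{C}) = \infty$). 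The hypothesis rules out (d) in all four rounds.

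First I will show that any round ending in (a) or (b) yields (i) or (ii) immediately. A deactivation shrinks the active count, so (ii) holds. A merge to an inactive target $\mathit{C'}$ completes at once, so (i) holds. A merge to a sleeping target $\mathit{C'}$ forces $\mathit{C'}$ to become active upon receiving $\langle connect \rangle$; if $\mathit{C'}$ accepts then a merge completes and (i) holds, while if $\mathit{C'}$ replies with $\langle refind\_epsilon \rangle$, then $\mathit{C'}$ transitions from sleeping to inactive and the sleeping count drops, yielding (ii).

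Second I will analyze rounds ending in (c). By Claim~\ref{claim-1}, when inactive $\mathit{C}$ computes $\epsilon_1(\mathit{C})$, no neighboring component is active, so the target $\mathit{C'}$ of the proceed is either sleeping or inactive. If $\mathit{CS(C') = sleeping}$, then upon receiving $\langle proceed \rangle$ the sleeping component $\mathit{C'}$ wakes up and becomes active, which instantly reduces the sleeping count and establishes (ii). Hence the only round type that does not directly produce (i) or (ii) is an inactive-to-inactive proceed along a $\mathit{refind}$ edge.

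Third I will rule out four consecutive inactive-to-inactive proceeds. Such a chain $\mathit{C_1 \to C_2 \to C_3 \to C_4}$ requires each $\mathit{C_i}$ to be inactive and to have finite $\epsilon_1(\mathit{C_i})$, which, since no $\mathit{sleeping}$ neighbor exists in this case (else (ii) would already follow), must be attained on a $\mathit{refind}$ edge at $\mathit{C_i}$. Since each $\mathit{refind}$ edge at $v \in \mathit{C_i}$ originates from a prior event in which $\mathit{C_i}$ (while active) sent $\langle connect \rangle$ to a sleeping target that then deactivated, and since by Claim~\ref{claim-2} every component receives $\langle proceed \rangle$ at most once, the chain traverses distinct components and distinct historical $\mathit{refind}$ structures. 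Invoking Lemma~\ref{lemma1} to ensure consistency of the neighborhood structure and tracing back the creation of $\mathit{refind}$ edges, I will show that such a chain cannot persist for four rounds without either (ii) being triggered (a sleeping neighbor is eventually reached) or $\langle back \rangle$ being forced (excluded by hypothesis), thereby completing the argument. The main obstacle will be precisely bounding the length of an inactive-to-inactive proceed chain, which rests on the interplay between $\mathit{refind}$ edge creation, the single-receipt property of $\langle proceed \rangle$, and the constraints governing finite $\epsilon_1$ at inactive components.
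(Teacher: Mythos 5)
Your case decomposition matches the paper's (merge, deactivate, proceed, back, with back excluded by hypothesis), and your handling of the merge and deactivation cases agrees with the paper's accounting. The problem is the proceed case, which is exactly where the four-round bound has to be earned, and there your proposal stops short of a proof.

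Concretely: your third part, which must rule out long chains of inactive-to-inactive $\langle proceed \rangle$ forwardings, is written as a plan ("I will show that such a chain cannot persist for four rounds\ldots"; "The main obstacle will be precisely bounding the length of an inactive-to-inactive proceed chain") rather than an argument. Nothing in the proposal actually bounds that chain, and the ingredients you cite do not obviously do so: Claim~\ref{claim-2} bounds the total number of $\langle proceed \rangle$ actions over the whole execution at $|V|-1$, which is consistent with a single chain of length far exceeding four, and Lemma~\ref{lemma1} only tells you what happens when $\mathit{\epsilon_1(C)=\infty}$, which is precisely the case your hypothesis excludes. The paper closes this case differently and more directly: when an inactive $\mathit{C'}$ receives $\langle proceed \rangle$, it argues that $\mathit{C'}$ must have a $\mathit{sleeping}$ neighbour $\mathit{C''}$ (otherwise $\mathit{C'}$ would be forced to send $\langle back \rangle$, contradicting the hypothesis), so the chain is at most $\mathit{C} \rightarrow \mathit{C'} \rightarrow \mathit{C''}$ with $\mathit{C''}$ sleeping; one round for $\mathit{C}$, one for $\mathit{C'}$, and at most two for $\mathit{C''}$ to merge or deactivate gives the stated four. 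Without some such argument that the second hop lands on a sleeping component (or an equivalent bound on the number of inactive-to-inactive hops), your proof does not establish the claim.

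A secondary point: you treat a $\langle proceed \rangle$ delivered to a $\mathit{sleeping}$ component as instantly realizing event (ii) because the sleeping count drops when it wakes. The paper does not credit a $\mathit{sleeping}\rightarrow\mathit{active}$ transition as event (ii); it follows the woken component for up to two further rounds until it merges or deactivates. Your reading is defensible against the literal wording of the claim, but it is not the notion of progress that Lemma~\ref{lemma3} consumes (there, event (ii) is a sleeping or active component becoming inactive), so adopting it would force you to redo the downstream termination accounting. You also omit the pruning action from the case list, though that is harmless since it occurs only once, at the root, in a single round.
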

\begin{proof}
In our proposed D-PCST algorithm the leader of a component $\mathit{C}$ starts finding its $\mathit{\epsilon(C)}$ by executing the procedure $proc\_initiate()$. Depending on the current state of $\mathit{C}$ i.e. $\mathit{CS(C)}$ and its computed value of $\mathit{\epsilon(C)}$, the leader of $\mathit{C}$ decides to take any one of the following actions: (i) {\em merging} (ii) {\em deactivation} (iii) {\em  sending $\langle proceed \rangle$}, (iv) {\em sending $\langle back \rangle$}, and (v) {\em pruning}. If the action of sending {\em $\langle back \rangle$} is not taken by any one of the four consecutive rounds of $proc\_initiate()$ then within four consecutive rounds of the $proc\_initiate()$ any one of the remaining actions is guaranteed to happen. 

First consider the action of {\em merging}. Before the action of {\em merging}, the leader of the component $\mathit{C}$ computes its $\mathit{\epsilon(C)}$ in one round of $proc\_initiate()$. After that it sends a $\langle merge \rangle$ to the corresponding component say $\mathit{C'}$. If $\mathit{CS(C') = inactive}$ then $\mathit{C'}$ immediately merges with $\mathit{C}$ and in this case merge happens in one round of $proc\_initiate()$. As a result one of the components decreases in the graph. If $\mathit{CS(C') = sleeping}$ then $\mathit{C'}$ takes one round of $proc\_initiate()$ to decide whether to merge with $\mathit{C}$ or deactivate itself. If it decides to merge with $\mathit{C}$ then number of component decreases by one in the graph. On the other hand if it decides to deactivate itself then a $\mathit{sleeping}$ component vanish in the graph. Therefore action of {\em merging} takes at most two rounds of $proc\_initiate()$.

We know that only an $\mathit{active}$ component can decide to deactivate itself. To be deactivated a component $\mathit{C}$ finds its $\mathit{\epsilon(C) = \epsilon_2(C)}$ in exactly one rounds of  $proc\_initiate()$. As a result one $\mathit{active}$ component decreases in the graph.

A component $\mathit{C}$ initiates the action of  sending $\langle proceed \rangle$ only if it is in $\mathit{inactive}$ state. For this action, first the leader of $C$ computes its $\mathit{\epsilon_1(C)}$ which takes one round of $proc\_initiate()$. After that it sends $\langle proceed \rangle$ to the corresponding neighboring component say $\mathit{C'}$. Upon receiving $\langle proceed \rangle$ from $\mathit{C}$, depending on its current state, the component $\mathit{C'}$ does the following: 
\begin{enumerate}[(i)]
\item  $\mathit{CS(C') = sleeping}$. $\mathit{C'}$ starts finding its $\mathit{\epsilon(C')}$ to decide whether to {\em merge} with some other component or (ii) {\em deactivate} itself. In case of merging, it takes at most another two rounds of $proc\_initiate()$ and as a result one component decreases, i.e. from the point of finding $\mathit{\epsilon_1(C)}$ at $\mathit{C}$ upto the merging of the component $\mathit{C'}$ with a neighboring component it takes at most three rounds of $proc\_initiate()$. In case of deactivation,  $\mathit{C'}$ takes exactly one round of $proc\_initiate()$ for which one sleeping component decreases and takes total two rounds of $proc\_initiate()$ from the point of finding $\mathit{\epsilon_1(C)}$ at $\mathit{C}$ upto the deactivation of $\mathit{C'}$.

\item $\mathit{CS(C') = inactive}$. In this case $\mathit{C'}$ receives $\langle proceed \rangle$ because there exists an edge $e$ such that $e \in \delta(\mathit{C'}) \wedge \delta(\mathit{C})$ and the state the edge $e \in \delta(v)$ at some node $\mathit{v \in C}$ must be $\mathit{refind}$. This is because in some early stages of the algorithm the component $\mathit{C}$ or a sub-component of $\mathit{C}$ sent a $\langle connect \rangle$ to $\mathit{C'}$ and in response to that, $\mathit{C'}$ became $\mathit{inactive}$ and as a result sent back a $\langle \mathit{refind\_epsilon} \rangle$ to $\mathit{C}$ or to the sub-component of $\mathit{C}$. Now  there should be at least one component $\mathit{C''}$ such that $\mathit{CS(C'') = sleeping}$ in the neighborhood of $\mathit{C'}$. Otherwise $C'$ has to take the action of {\em sending} $\langle back \rangle$ which is not possible according to our assumption. Since $\mathit{CS(C') = inactive}$ therefore $\mathit{C'}$ takes one round of $proc\_initiate()$ to compute its $\mathit{\epsilon_1(C')}$ to take the action of {\em sending $\langle proceed \rangle$} to a neighboring component $\mathit{C''}$ such that $\mathit{CS(C'') = sleeping}$. After that $\mathit{C''}$ follows at most two rounds of $proc\_initiate()$ to decide the action of either {\em merging} or  {\em deactivation} which guarantees the occurring of any one of the mentioned events. Therefore  from the point of finding $\mathit{\epsilon_1(C)}$ at $\mathit{C}$ upto any one of the events to be happened takes at most four rounds of $proc\_initiate()$.
\item $\mathit{CS(C') = active}$. By Claim~\ref{claim-1} this condition is not possible.
\end{enumerate}

In case of the action of {\em pruning}, the root component $\mathit{C_r}$ takes exactly one round of $proc\_initiate()$ to computes its $\mathit{\epsilon(C_r)}$ which must be equal to $\infty$.
Therefore, we claim that if none of the four consecutive rounds of $proc\_initiate()$ initiates the action of {\em sending $\langle back \rangle$} then any one of the following events is guaranteed to happen: (i) sum of the number of components decreases (ii) one of the $\mathit{sleeping}$ or $\mathit{active}$ components decreases.
\end{proof}

\begin{lemma}
\label{lemma3}
The {\em growth} phase of the D-PCST algorithm terminates after at most $9|V| -7$ rounds of   $proc\_initiate()$.  
\end{lemma}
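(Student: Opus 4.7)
The plan is to bound the total number of rounds of $proc\_initiate()$ executed in the growth phase by combining the ``four-round window'' guaranteed by Claim~\ref{claim-3} with monotonicity-based counts of the three distinguished round types. I will classify each round as either a \emph{back round} (one that triggers $\langle back \rangle$), a \emph{progress round} (one in which event (i) or event (ii) of Claim~\ref{claim-3} occurs), or an \emph{intermediate round} that is neither. Back rounds and progress rounds are disjoint because sending $\langle back \rangle$ requires $\epsilon_1(C)=\infty$, which precludes any merging or deactivation in the same round.

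Next I would establish the three key counts. By Lemma~\ref{lemma2}, the number of back rounds is at most $|V|-1$. The number $N_i$ of rounds witnessing event~(i) is at most $|V|-1$, because each such event strictly decreases the total component count (which starts at $|V|$ and ends at least $1$). The number $N_{ii}$ of rounds witnessing event~(ii) is also at most $|V|-1$, which I would justify by arguing that the sum of sleeping and active components begins at $|V|-1$ (only the root is initially inactive), never increases under any admissible transition (sleeping-to-active conversions balance the sum, and deactivations together with the merging cases that absorb a sleeping or an active component strictly decrease it), and terminates at $0$ at the end of the growth phase.

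The final step applies Claim~\ref{claim-3}: between any two consecutive distinguished rounds (back or progress) there can be at most three intermediate rounds, since four consecutive non-back rounds would already contain a progress round. Charging each intermediate round to the next distinguished round yields at most four rounds per progress event plus one per back round, together with a small additive constant accounting for the initial trigger round and the trailing sequence of rounds before termination. Summing,
\[
T \;\leq\; 4\,N_i + 4\,N_{ii} + B + 2 \;\leq\; 4(|V|-1) + 4(|V|-1) + (|V|-1) + 2 \;=\; 9|V|-7.
\]
The main obstacle will be making the charging precise: I must verify that the partition assigns each intermediate round to a unique distinguished round (so no round is over-counted) and that the additive boundary constant is indeed at most~$2$. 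A secondary subtlety is that a single round can simultaneously witness both event~(i) and event~(ii), for instance when an active component merges with a sleeping component; since the target is an upper bound, counting such a round under both $N_i$ and $N_{ii}$ only loosens the inequality and is therefore harmless.
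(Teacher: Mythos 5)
Your proposal is correct and follows essentially the same route as the paper: both arguments combine Claim~\ref{claim-3} (the four-round window forcing a component-count or sleeping/active-count decrease) with Lemma~\ref{lemma2} (at most $|V|-1$ back actions), and both arrive at the bound via the same decomposition $4(|V|-1)+4(|V|-1)+(|V|-1)+2 = 9|V|-7$. Your explicit charging of intermediate rounds to the next distinguished round is in fact a somewhat more careful accounting than the paper's informal summation, but it is the same underlying argument.
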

\begin{proof}
Initially the state of the root component $\mathit{C_r}$ is $\mathit{inactive}$ and it takes one round of  $proc\_initiate()$  to compute its $\mathit{\epsilon_1(C_r)}$. After that $\mathit{C_r}$ sends $\langle proceed \rangle$ to a neighboring component to take further actions of the algorithm. Claim~\ref{claim-3}  ensures that in the worst case at most $4(|V| - 1)$ round of $proc\_initiate()$ is required to decrease the sum of the number of components and becomes one or at most $4(|V| - 1)$ rounds of $proc\_initiate()$ is required to change the state of each $\mathit{sleeping}$ or $\mathit{active}$ component to $\mathit{inactive}$ state. Lemma~\ref{lemma2} proves that the action of $\langle back \rangle$ is generated at most $|V| - 1$ times. If the root node $r \in C_r$ finds that $\mathit{\epsilon_1(C_r) = \infty}$ then instead of taking the the action of $\langle back \rangle$ the root component starts the {\em pruning} phase which indicates the termination of the {\em growth} phase. Before the termination of the {\em growth} phase additionally one round of $proc\_initiate()$ is required to find $\mathit{\epsilon_1(C_r) = \infty}$. Summing for all the cases we get that the total number of rounds of $proc\_initiate()$ is equal to $1 + 4(|V| - 1) + 4(|V| - 1) + (|V| - 1) + 1 = 9|V| - 7$. Therefore it is guaranteed that after at most $ 9|V| - 7$ rounds of $proc\_initiate()$ the initiation of round of $proc\_initiate()$ stops. Once the  $proc\_initiate()$ stops, no more messages related to the {\em growth} phase are exchanged in the network. This ensures that the {\em growth} phase eventually terminates after at most $9|V| -7$ rounds of $proc\_initiate()$. 
\end{proof}

\begin{lemma}
\label{lemma4}
Pruning phase of D-PCST algorithm eventually terminates.
\end{lemma}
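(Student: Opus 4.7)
The plan is to establish termination of the pruning phase by dissecting it into its two constituent operations and bounding the number of steps in each separately.

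First, consider the operation performed inside every non-root inactive component $C$: each node $v\in C$ simply iterates over its incident edges and resets $SE(e)$ to $\mathit{basic}$ whenever the current state is not $\mathit{basic}$. This is a purely local transformation that each node performs at most once per incident edge, so it terminates in a finite number of local steps. Counting across all non-root inactive components, the total work and messaging is bounded by $O(|E|)$.

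For the second operation inside the root component $C_r$, I would argue by a monovariant. Let $\Phi$ denote the number of edges $e$ inside $C_r$ whose state is currently $\mathit{branch}$. Initially, by the growth phase, these edges form a tree rooted at $r$, so $\Phi\le |V|-1$. Each pruning event at a leaf node $v$ is triggered only when $v$ has $\mathit{labelled\_flag=TRUE}$ and exactly one incident $\mathit{branch}$ edge $e=(v,u)$; the effect of the event is to change $SE(e)$ from $\mathit{branch}$ to $\mathit{basic}$ at both endpoints $v$ and $u$, to reset $\mathit{prize\_flag}$ and $\mathit{labelled\_flag}$ at $v$, and (via a notification along $e$) to inform $u$ so that $u$ may itself become a new candidate leaf. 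Hence $\Phi$ strictly decreases by one with every pruning event, and since $\Phi\ge 0$ is an integer bounded below, at most $|V|-1$ pruning events can occur in the entire execution. Because $\mathit{labelled\_flag}$ is set to $\mathit{FALSE}$ at a node the moment it is pruned, the same node can never be pruned twice, so the parallel nature of the process introduces no duplication.

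The remaining thing to check is that the work between consecutive pruning events is finite. Each pruning event at $v$ generates a single notification to its former parent $u$, which then locally re-evaluates its two conditions; this is bounded local work. Moreover, by Lemma~\ref{lemma3}, once the growth phase has terminated no further $proc\_initiate()$ messages can be generated, so only the finitely many pruning-related messages circulate. Combining the two operations, the pruning phase terminates in finitely many steps.

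The main obstacle I anticipate is not the monovariant itself but making precise the interaction between the non-root operation and the root-component leaf pruning, and arguing that the asynchronous/parallel evaluation of the leaf conditions cannot create a race in which a non-leaf node is erroneously classified as a leaf and pruned. The FIFO and reliable-delivery assumptions on channels, together with the fact that a pruned node's $\mathit{labelled\_flag}$ is immediately set to $\mathit{FALSE}$ before any further notification is processed at that node, should be enough to rule this out, but this point is worth verifying carefully in the formal write-up.
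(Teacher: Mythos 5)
Your proposal is correct and takes essentially the same route as the paper's proof: both arguments observe that each node is reached by the pruning instruction at most once, that each pruning event in the root component consumes one $\mathit{branch}$ edge of the tree (the paper phrases this as at most $|V|-1$ $\langle \mathit{backward\_prune} \rangle$ messages rather than as an explicit monovariant $\Phi$), and that a node which fails to prune sends no further messages. The difference is purely presentational, your potential-function framing making the $|V|-1$ bound on pruning events slightly more explicit.
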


\begin{proof}
After the termination of the {\em growth} phase the root node $r$ initiates the {\em pruning} phase by sending $\langle prune \rangle$ on each edge $\mathit{e \in \delta(r)}$ if $\mathit{SE(e) = branch}$ or  $\mathit{EPM(e) = TRUE}$. Here at node $\mathit{v \in C}$ in a component $C$, $\mathit{EPM(e)}$ is a local boolean variable for each edge $\mathit{e \in \delta(v)}$ and by default $\mathit{EPM(e) = FALSE}$ for each $\mathit{e \in \delta(v)}$. Whenever a {\em frontier} node sends a $\langle proceed \rangle$ over an incident edge $e$ to some other neighboring component then it sets $\mathit{EPM(e) = TRUE}$.  Upon receiving $\langle prune \rangle$ on an edge $e$, except on edge $e$ a node $v$ forwards $\langle prune \rangle$ on each edge $e' \in \delta(v)$ if $\mathit{SE(e') = branch}$ or  $\mathit{EPM(e') = TRUE}$. After that if $v$ belongs to a non-root inactive component then it sets $\mathit{SE(e'') = basic}$ for each edge $e'' \in \delta(v)$ if $\mathit{SE(e'') \neq basic}$. Note that a node receives $\langle prune \rangle$  exactly once.

In the root component $\mathit{C_r}$, pruning starts at leaf nodes of the tree rooted at $r$. Whenever a leaf node $\mathit{v \in C_r}$ receives $\langle prune \rangle$  then $v$ prunes itself if $\mathit{labelled\_flag = TRUE}$ and there exists exactly one edge $\mathit{e \in \delta(v)}$ such that $\mathit{SE(e) = branch}$. In this case, $v$ sets its local variable $\mathit{prize\_flag}$ to $\mathit{TRUE}$ indicating that it is contributing to the Penalty part of the PCST and $\mathit{labelled\_flag = FALSE}$. Then $\langle backward\_prune \rangle$ is sent on $e$ and $\mathit{SE(e)}$ is set to $\mathit{basic}$. Upon receiving $\langle backward\_prune \rangle$ on an edge say $e$, a node first sets $\mathit{SE(e)=basic}$ and continues with the pruning operation. If a node $v$ fails to prune itself then no further message is sent on any of its incident edges. In this way all nodes in the $C_r$ stops sending further messages on their incident edges and thus  {\em pruning} phase eventually terminates.  
\end{proof}

\begin{theorem}
\label{theorem1}
The D-PCST algorithm eventually terminates.
\end{theorem}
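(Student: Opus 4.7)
The plan is to combine the termination guarantees already established for the two phases of the D-PCST algorithm, namely the $\mathit{growth}$ phase and the $\mathit{pruning}$ phase, noting that the algorithm executes them strictly sequentially. First, I would recall that the algorithm begins with the root node $r$ spontaneously initiating the $\mathit{growth}$ phase, which proceeds via successive rounds of $proc\_initiate()$. Only after the leader of the root component $C_r$ detects $\mathit{\epsilon_1(C_r) = \infty}$ does it switch to driving the $\mathit{pruning}$ phase by dispatching $\langle prune \rangle$ messages on its relevant incident edges. Thus total termination reduces to termination of each phase individually, together with the observation that the transition between the two phases happens in finite time.

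For the $\mathit{growth}$ phase, I would invoke Lemma~\ref{lemma3}, which states that this phase terminates after at most $9|V| - 7$ rounds of $proc\_initiate()$. To conclude that each round itself takes finite time, I would appeal to Lemma~\ref{lemma0}: each round generates at most $6|V| + 2|E| - 4$ messages, and since the communication channels are reliable and FIFO (though asynchronous), every such message is delivered in finite time. Consequently, the finitely many rounds of $proc\_initiate()$ all complete in finite time, and the leader of $C_r$ eventually evaluates $\mathit{\epsilon_1(C_r) = \infty}$, triggering the end of the $\mathit{growth}$ phase.

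For the $\mathit{pruning}$ phase, I would invoke Lemma~\ref{lemma4} directly, which asserts that once this phase is initiated it eventually terminates. Combining this with the conclusion above for the $\mathit{growth}$ phase, and with the sequentiality of the two phases, yields the desired termination of D-PCST. The argument is essentially a stitching together of previously proven statements, so I do not expect any single step to be a serious obstacle; the only subtlety is to make explicit that no $\mathit{growth}$ phase message remains in transit at the moment the $\mathit{pruning}$ phase begins, which follows because the transition occurs only after $\mathit{\epsilon_1(C_r) = \infty}$ has been computed at the root, and by Lemma~\ref{lemma1} this implies all neighboring components of $C_r$ are already $\mathit{inactive}$, so no residual $\langle proceed \rangle$, $\langle back \rangle$, $\langle merge \rangle$ or $\langle connect \rangle$ traffic can still arrive and restart growth-phase computation.
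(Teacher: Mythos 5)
Your proposal is correct and follows essentially the same route as the paper, which likewise proves Theorem~\ref{theorem1} by combining Lemma~\ref{lemma3} (termination of the growth phase) with Lemma~\ref{lemma4} (termination of the pruning phase). The extra care you take about each round completing in finite time and about no residual growth-phase messages at the phase transition is additional rigor the paper leaves implicit, not a different argument.
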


\begin{proof}
Lemma~\ref{lemma3} and Lemma~\ref{lemma4} prove that the {\em growth} phase and the {\em pruning} phase of the D-PCST algorithm terminate respectively. Together Lemma~\ref{lemma3} and Lemma~\ref{lemma4} prove that the D-PCST algorithm terminates. 
\end{proof}

\subsection{Message and Time complexity}
\subsubsection{A lower bound on message complexity for trivial distributed PCST}
\label{subsec:message_complexity}
A trivial message passing distributed algorithm for PCST can be as follows. It  collects the whole information of the network (weight of each $e \in E$, ID of each $v \in V$, and $p_v$ for each $v \in V$) at some node ($r$), computes the PCST using the  $GW$-algorithm \cite{GW_1995} at $r$ and then  $r$ informs each  node whether it belongs to the Penalty part or the Steiner part of the PCST. 
It is obvious that this is a non-local algorithm and suffers from single point of failure. However in the absence of a benchmark local algorithm for distributed PCST, it serves the purpose of  calculating a trivial lower bound of message complexity. 
For message-efficient convergecast and broadcast of information,  a spanning tree (similar to BFS or DFS) of the network can be constructed rooted at $r$. The lower bound of message complexity for finding a rooted spanning tree using distributed algorithm is $\Omega(|E|)$ \cite{Awerbuch:1987:ODA:28395.28421}.  Now $r$ broadcasts a query message asking each node to send their local information. This broadcast requires $O(|V|)$ messages. Each node sends its information to $r$ using a path. The path  contains $O(|V|)$ intermediate nodes. Therefore collecting information from all the nodes to $r$ requires $O(|V|^2)$ messages. Note here that we do not use any message aggregation at 
intermediate nodes  in this trivial algorithm.  Upon receiving the information from all the nodes, $r$ can compute the PCST of the entire graph using the centralized $GW$-algorithm.  After the computation of the PCST, $r$  sends message to each node  informing whether it belongs to the Penalty part or the Steiner part. In case a node belongs to the Steiner, $r$ also informs the node about the specific incident edges which are part of the Steiner. Like the earlier convergecast, this broadcast of information will take $O(|V|^2)$ messages. Therefore a trivial distributed PCST algorithm as described above takes $\Omega(|E|  + |V|^2)$   messages in a general networks. In a connected graph, $|V| - 1 \leq |E| \leq |V|^2$. Therefore a trivial lower bound of message complexity for distributed PCST is $\Omega(|V|^2)$. However the time complexity remains $O(|V|^2 \log |V|)$ since at the central node we use GW-algorithm. 

\subsubsection{Message Complexity of the D-PCST algorithm}
We determine here the upper bound on the number of messages exchanged during the execution of the D-PCST algorithm. Lemma~\ref{lemma3} shows that the {\em growth} phase terminates after at most $9|V| - 7$ rounds of $proc\_initiate()$. And Lemma~\ref{lemma0} proves that in the worst case a round of $proc\_initiate()$ can generate at most $6|V| + 2|E| - 4$  messages. Therefore number of messages exchanged until the termination of the {\em growth} phase is at most $(9|V| -7)(6|V| + 2|E| - 4)$. 
	
In the {\em pruning} phase two types of messages are generated namely $\langle \mathit{prune} \rangle$ and $\langle \mathit{backward\_prune} \rangle$. A node $v$ sends or forwards $\langle prune \rangle$ on an edge $e \in \delta(v)$ if $\mathit{SE(e) = branch}$ or $\mathit{EPM(e) = TRUE}$. We know that the number of $\mathit{branch}$ edges of a component $\mathit{C}$ is exactly $|\mathit{C}| - 1$. Whenever the {\em growth} phase terminates then the total number of $\mathit{branch}$ edges of all the components in the graph is at most $|V| - 1$. It follows that at most $|V| - 1$ $\langle \mathit{prune} \rangle$ are generated with respect to $\mathit{branch}$ edges. Similarly at node $v$, the local boolean variable $\mathit{EPM(e)}$ is set to $\mathit{TRUE}$ if it sends a $\langle \mathit{procced} \rangle$ on edge $e$. Since by Claim~\ref{claim-2} at most $|V| - 1$ $\langle \mathit{procced} \rangle$ are sent in D-PCST and for each edge $e$ on which $\langle \mathit{procced} \rangle$ is sent the variable $\mathit{EPM(e)}$ is set to $\mathit{TRUE}$, therefore at most $|V| - 1$ $\langle \mathit{prune} \rangle$ are sent with respect to the boolean variable $\mathit{EPM}$. Therefore at most $2|V| - 2$ number of  $\langle \mathit{prune} \rangle$ is generated in the {\em pruning} phase.
A $\langle \mathit{backward\_prune} \rangle$ is exchanged only within the root component. Upon receiving $\langle prune \rangle$ a leaf node (which has exactly one incident $\mathit{branch}$ edge) of the root component sends a $\langle backward\_prune \rangle$ on the $\mathit{branch}$ edge only if it decides to prune itself. Since possible number of $\mathit{branch}$ edges in the root component is at most $|V| - 1$, therefore at most $|V| - 1$ number of $\langle \mathit{backward\_prune} \rangle$ is generated in the {\em pruning} phase. It follows that at most $3(|V| - 1)$ messages are generated in the {\em pruning} phase.

The analysis shown above proves that number of messages exchanged until the termination of the D-PCST algorithm is at most $(9|V| -7)(6|V| + 2|E| - 4) + 3(|V| - 1)$ which is equivalent to $O(|V|^2 + (|E||V|)$. Since the graph is connected, therefore  $|V| - 1 \leq |E| \leq |V|^2$. This implies that $|V|^2 = O(|E||V|)$. Using these facts we get that message complexity of the D-PCST algorithm is $O(|E||V|)$. Therefore we can claim the following theorem.
\begin{theorem}
\label{theorem2}
The message complexity of the D-PCST algorithm is $O(|E||V|)$.
\end{theorem}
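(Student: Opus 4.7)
The plan is to separately bound the messages generated in the growth phase and in the pruning phase, and then combine the two bounds, using the fact that $G$ is connected (so $|V|-1 \le |E|$) to absorb lower-order terms.

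First I would handle the growth phase. The idea is to combine Lemma~\ref{lemma3} and Lemma~\ref{lemma0} multiplicatively: Lemma~\ref{lemma3} guarantees that the growth phase ends after at most $9|V|-7$ rounds of $proc\_initiate()$, and Lemma~\ref{lemma0} bounds each such round by $6|V|+2|E|-4$ messages. Hence at most
\begin{equation*}
(9|V|-7)(6|V|+2|E|-4)
\end{equation*}
messages are exchanged in the growth phase, which is $O(|V|^2 + |V||E|)$.

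Next I would handle the pruning phase by counting the two message types that appear there separately. For $\langle prune \rangle$, note that a node forwards it on an incident edge $e$ only when $SE(e)=branch$ or $EPM(e)=TRUE$. The set of $branch$ edges across all components after growth is at most $|V|-1$ (each component contains at most $|C|-1$ branch edges). Moreover $EPM(e)$ is set to $TRUE$ only on edges that carried a $\langle proceed \rangle$, and by Claim~\ref{claim-2} there are at most $|V|-1$ such edges. Hence at most $2(|V|-1)$ messages of type $\langle prune \rangle$ are produced. For $\langle backward\_prune \rangle$, it only travels inside the root component along $branch$ edges, of which there are at most $|V|-1$. Therefore the pruning phase contributes at most $3(|V|-1)$ messages overall.

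Adding the two bounds gives a total of $(9|V|-7)(6|V|+2|E|-4)+3(|V|-1)=O(|V|^2+|V||E|)$. The last step is to observe that since $G$ is connected we have $|V|-1\le |E|$, so $|V|^2 = O(|V||E|)$ and the total simplifies to $O(|V||E|)$, establishing the claim. I do not expect a hard obstacle here, since the heavy lifting is done in Lemmas~\ref{lemma0}, \ref{lemma3} and Claim~\ref{claim-2}; the only subtle point to get right is the accounting for $\langle prune \rangle$ messages on edges with $EPM(e)=TRUE$, which requires linking the pruning messages back to the proceed messages bounded by Claim~\ref{claim-2} rather than to generic edges of the graph.
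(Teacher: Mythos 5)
Your proposal is correct and follows essentially the same route as the paper: it multiplies the round bound of Lemma~\ref{lemma3} by the per-round message bound of Lemma~\ref{lemma0} for the growth phase, bounds the pruning phase by $3(|V|-1)$ messages via the $branch$-edge count and Claim~\ref{claim-2} for the $EPM$ edges, and absorbs the $|V|^2$ term using connectivity. No gaps.
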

{\bf Time Complexity}. The worst case time complexity of the proposed algorithm is $O(|V||E|)$, which can be fine-tuned to give a complexity of $O(\mathcal{D}|E|)$ where $\mathcal{D}$ is the diameter of the network.

\subsection{Optimality of the D-PCST algorithm}
\begin{lemma}
\label{out-lemma1}
If $\mathit{CS(C_l) = sleeping}$ and it receives $\langle \mathit{connect(v, W(C_k), d_v, d_h(C_k))} \rangle$ or $\langle \mathit{proceed(d_h(C_k))} \rangle$  over an edge $e$ from a node $\mathit{v\in C_k}$ where $\mathit{C_k}$ is a neighboring component of $C_l$ then $C_l$  correctly computes  each of its local variables without violating any of the dual constraints.
\end{lemma}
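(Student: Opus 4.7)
The plan is to establish the lemma in two parts: first, that the local variables of $C_l$ are set correctly according to the algorithm's specification; second, that this initialization does not violate either the edge packing or the penalty packing dual constraints.

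First I would observe that since $C_l$ is in the sleeping state, it must be a singleton component $\{u\}$, where $u$ is the very node receiving the message; this is because all non-root components begin as sleeping singletons and a component only changes state by merging (after which it becomes active or inactive, never sleeping again). Thus $u$ is the leader of $C_l$. Upon receiving either $\langle \mathit{connect}(v, W(C_k), d_v, d_h(C_k)) \rangle$ or $\langle \mathit{proceed}(d_h(C_k)) \rangle$, node $u$ reads $d_h(C_k)$ from the message's argument and sets $d_u = W(C_l) = d_h(C_l) = d_h(C_k)$, precisely as prescribed in Section~\ref{description-D-PCST}. The correctness of the assignment is immediate since $u$ has direct access to the required quantity $d_h(C_k)$ via the incoming message, and no other information is needed.

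For the dual-constraint part, I would treat each of the two constraints separately. For the edge packing constraint along the message-carrying edge $e = (u, v)$: before $C_k$ dispatched the message, its leader computed $\epsilon_e$ using the appropriate formula for a sleeping neighbor, namely $\epsilon_e = \frac{w_e - d_v - d_h(C_k)}{2}$ if $\mathit{CS(C_k) = active}$ or $\epsilon_e = w_e - d_v - d_h(C_k)$ if $\mathit{CS(C_k) = inactive}$. In either formula, the fact that $C_k$ actually transmitted the message over $e$ (rather than treating $\epsilon_e$ as $\infty$) certifies $\epsilon_e \geq 0$, hence $d_v + d_h(C_k) \leq w_e$. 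After the initialization $d_u = d_h(C_k)$, the invariant $d_v + d_u \leq w_e$ thus holds. For any other incident edge $e' = (u, u')$ with $u'$ in some neighboring component $C_m$, I would use induction on the algorithm's history, exploiting the fact (ensured by Claim~\ref{claim-1} and the serialization of $\mathit{proc\_initiate()}$) that $d_h$ acts as an implicit global clock: whenever any previously processed component grew by $\epsilon$, this growth was simultaneously bounded by the minimum $\epsilon_e$ over its frontier, and sleeping singletons were treated consistently with their frontier neighbors' clock values; hence the invariant $d_u + d_{u'} \leq w_{e'}$ is preserved.

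For the penalty packing constraint $y_{\{u\}} \leq p_u$ on the new singleton, the implicit $y_{\{u\}}$ after initialization equals $d_h(C_k)$. I would argue $d_h(C_k) \leq p_u$ by tracing the chain of activations: each time a previously processed component $C$ was allowed to grow by $\epsilon(C) = \min(\epsilon_1(C), \epsilon_2(C))$, the quantity $\epsilon_1(C)$ itself is bounded above by any $\epsilon_e$ for $e \in \delta(C) \cap \delta(\{u\})$ when $\{u\}$ was sleeping, so the accumulated $d_h(C_k)$ cannot have outrun $p_u$. As a safety net, the algorithm immediately computes $\epsilon_2(C_l) = p_u - d_h(C_k)$ after the initialization and, if this is non-positive, deactivates $C_l$ via $\langle \mathit{refind\_epsilon} \rangle$, so any borderline case is handled without violating feasibility.

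The main obstacle I expect is the rigorous justification that $d_h(C_k) + d_{u'} \leq w_{e'}$ for edges $e'$ leading to \emph{other} neighboring components $C_m$ of $u$, and the parallel claim $d_h(C_k) \leq p_u$ for the penalty constraint. Both require a careful induction over the sequence of merges, deactivations, proceed/back transitions, and previous sleeping-component activations, tracking how $d_h$ values propagate through the network. A secondary subtlety is the case in which $u$ has several sleeping neighbors across distinct components: the correctness argument must be insensitive to the order in which these components issue $\langle \mathit{proceed} \rangle$ or $\langle \mathit{connect} \rangle$ messages that subsequently reach $u$'s neighborhood.
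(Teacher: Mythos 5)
There is a genuine gap, and it concerns the central mechanism of the paper's argument. You assume that the dual constraints are already satisfied at the moment $C_l$ initializes $d_u = W(C_l) = d_h(C_l) = d_h(C_k)$: for the edge packing constraint you claim that the mere fact that $C_k$ transmitted over $e$ ``certifies $\epsilon_e \geq 0$,'' and for the penalty packing constraint you try to show $d_h(C_k) \leq p_u$ by tracing the chain of activations. Neither claim is true. The algorithm selects the MOE as the edge minimizing $\epsilon_e$ and sends $\langle connect \rangle$ over it with no non-negativity check; the worked example in Subsection~\ref{example} has $\epsilon_1(C) = -1$ with a $\langle connect \rangle$ sent anyway. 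Likewise $d_h(C_k)$ is an accumulated network-wide quantity with no reason to be bounded by the prize of an arbitrary sleeping node, so your proposed induction for $d_h(C_k) \leq p_u$ rests on a false premise and would not close.

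The paper's proof takes a different and essential route: after the initialization, $C_l$ computes $\epsilon_e$, $\epsilon_2(C_l)$ and $\epsilon(C_l) = \min(\epsilon_e, \epsilon_2(C_l))$, and the lemma is established by a four-way case analysis on which term attains the minimum and on the \emph{sign} of $\epsilon(C_l)$. When $\epsilon(C_l) < 0$, the initialization has overshot and the excess is deducted from the implicit dual variables (and from $d_u$, $W$, $d_h$) so that the relevant constraint becomes exactly tight rather than violated; when $\epsilon(C_l) \geq 0$, that amount can be safely added before the constraint becomes tight. Your closing remark about $\epsilon_2(C_l)$ being non-positive as a ``safety net'' gestures at half of this mechanism for the penalty constraint, but you treat it as a borderline corner case rather than as the core of the proof, and you provide no analogous correction for the edge packing constraint when $\epsilon_e < 0$. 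As written, your argument would leave the edge constraint on $e$ violated in precisely the situations the algorithm is designed to repair.
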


\begin{proof}
	Since $\mathit{CS(C_l) = sleeping}$, therefore $\mathit{C_l}$ is a single node component. Let it be $\{u\}$. If $u$ receives $\langle \mathit{connect(v, W(C_k), d_v, d_h(C_k))} \rangle$ from a node $\mathit{v \in C_k}$  over the edge $e$ then first  $\mathit{C_l}$ becomes {\em active} and then  $u$ initializes its local variables $\mathit{d_u = d_h(C_k)}$, $\mathit{W(C_l) = d_h(C_k)}$ and $\mathit{d_h(C_l) = d_h(C_k)}$. After that  $u$ computes $\mathit{\epsilon(C_l)}$ as follows:  
	\begin{align*}
		\mathit{\epsilon_e} = & \mathit{\frac{w_e - d_u - d_v}{2}}\\
		\mathit{\epsilon_2(C_l)} = & \mathit{TP(C_l) - W(C_l)} = \mathit{p_u - W(C_l)}
	\end{align*}
	and $\mathit{\epsilon(C_l) = \min (\epsilon_e, \epsilon_2(C_l))}$. Therefore it is clear that $\mathit{d_h(C_k)}$ is used by $\mathit{C_l}$ to compute its $\mathit{\epsilon(C_l)}$ in case $\mathit{CS(C_l)=sleeping}$. Now there are four possible cases:\\
\textbf{Case 1}: {$\mathit{\epsilon(C_l) = \epsilon_e}$ and $\mathit{\epsilon(C_l)}<0$.} {The condition
   $\mathit{\epsilon(C_l)}<0$ indicates that the dual {\em edge packing} constraint $\mathit{\sum_{S: e \in \delta(S)} y_S \leq w_e}$ is violated on the edge $e$ when each dual variable $\mathit{y_S : S \subset V \wedge e \in \delta(S)}$ is increased by a value $\mathit{\epsilon(C_l)}$. More specifically dual variables $\mathit{y_{C_l}}$ and $\mathit{y_{C_k}}$ are excessively increased by $\mathit{\epsilon(C_l)}$. To ensure that the dual constraint is not violated, the excess value $\mathit{\epsilon(C_l)}$ must be deducted from each of the dual variables $\mathit{y_{C_l}}$ and $\mathit{y_{C_k}}$. After the deduction, both components $\mathit{C_l}$ and $\mathit{C_k}$  merge and form a new component $\mathit{C_l \cup C_k}$ without violating the dual constraints. Note that every node $\mathit{v \in C_l \cup C_k}$ also updates its local variables $\mathit{d_v = d_v - \epsilon(C_l)}$ and $\mathit{W(C_l \cup C_k) = W(C_l) + W(C_k) - 2\epsilon(C_l)}$. In addition, corresponding $\mathit{d_h(C_l \cup C_k)}$ is also updated accordingly.
    }\\
\textbf{Case 2}: {$\mathit{\epsilon(C_l) = \epsilon_e}$ and $\mathit{\epsilon(C_l)}\geq 0$.}{ This ensures that at most $\mathit{\epsilon(C_l)}$ can be added to both $\mathit{y_{C_l}}$ and  $\mathit{y_{C_k}}$ without violating the dual {\em edge packing} constraint $\mathit{\sum_{S: e \in \delta(S)} y_S \leq w_e}$ for edge $e$. Therefore the components $\mathit{C_l}$ and $\mathit{C_k}$ merge through the edge $e$ and forms a bigger component $\mathit{C_l \cup C_k}$. Each node $\mathit{v \in C_l \cup C_k}$ also updates  its local variables $\mathit{d_v = d_v + \epsilon(C_l)}$ and $\mathit{W(C_l \cup C_k) = W(C_l) + W(C_k) + 2\epsilon(C_l)}$. In addition, corresponding $\mathit{d_h(C_l \cup C_k)}$ is also updated accordingly.  
    }\\
\textbf{Case 3}: {$\mathit{\epsilon(C_l) = \epsilon_2(C_l)}$ and $\mathit{\epsilon(C_l)}<0$. }{In this case the dual variable $\mathit{y_{C_l}}$ for the component $\mathit{C_l}$ is excessively increased by $\mathit{\epsilon(C_l)}$ and this indicates that the dual {\em penalty packing} constraint $\mathit{\sum_{S \subseteq C_l} y_S \leq \sum_{v \in C_l} p_v}$ is violated at $\mathit{C_l}$. Therefore $\mathit{y_{C_l} = y_{C_l} - \epsilon_2(C_l)}$ and it ensures that the dual {\em penalty packing} constraint for the component $\mathit{C_l}$ is not violated and becomes tight. Node $\mathit{u \in C_l}$ updates its local variables $\mathit{d_u = d_u - \epsilon(C_l)}$ and $\mathit{W(C_l) = W(C_l) - \epsilon(C_l)}$. In addition, corresponding $\mathit{d_h(C_l)}$ is also updated accordingly.
}\\
\textbf{Case 4} : {$\mathit{\epsilon(C_l) = \epsilon_2(C_l)}$ and $\mathit{\epsilon(C_l)}\geq 0$.} {This indicates that at most $\mathit{\epsilon(C_l)}$ can be added to the dual variable $\mathit{y_{C_l}}$ in component $\mathit{C_l}$  without violating the dual {\em penalty packing} constraint $\mathit{\sum_{S \subseteq C_l} y_S \leq \sum_{v \in C_l} p_v}$. Since after the addition of $\mathit{\epsilon(C_l)}$ to the dual variable $\mathit{y_{C_l}}$, the dual {\em penalty packing} constraint $\mathit{\sum_{S \subseteq C_l} y_S \leq \sum_{v \in C_l} p_v}$ becomes tight, therefore the component $\mathit{C_l}$ decides to deactivate itself. The node $\mathit{u \in C_l}$ updates its local variables $\mathit{d_u = d_u + \epsilon(C_l)}$ and $\mathit{W(C_l) = W(C_l) + \epsilon(C_l)}$. In addition, corresponding $\mathit{d_h(C_l)}$ is also updated accordingly.} Therefore after receiving $\langle connect \rangle$, $\mathit{C_l}$ correctly computes each of its local variable without violating any of the dual constraints.
  
  Similarly if $u$ receives $\langle proceed\mathit{(d_h(C_k))} \rangle$ from a node $\mathit{v \in C_k}$ over an edge $e$ then first $\mathit{C_l}$ becomes {\em active} and then node $u$ initializes its local variables $\mathit{d_u = d_h(C_k)}$, $\mathit{W(C_l) = d_h(C_k)}$ and $\mathit{d_h(C_l) = d_h(C_k)}$. Note that if a component receives a $\langle proceed \rangle$ then the state of each component in its neighborhood is either $\mathit{sleeping}$ or $\mathit{inactive}$. In this state of the algorithm there can not exist any active component in the whole network. 
Let $\mathit{e' \in \delta(C_l)}$ be the MOE of $\mathit{C_l}$ which connects to a node $\mathit{w \in C_p \neq C_l}$. Then $\mathit{C_l}$ computes its $\mathit{\epsilon(C_l)}$ as follows:
	\begin{align*}
		{\epsilon_1(C_l)} = & 
			\begin{cases} 
						\mathit{\frac{w_{e'} - d_u - d_h(C_l)}{2}}, & \text{if}\  \mathit{CS(C_p) = sleeping}\\
						\mathit{w_{e'} - d_u - d_w}			, & \text{if}\ \mathit{CS(C_p) = inactive}\\
      		\end{cases}\\
		\mathit{\epsilon_2(C_l)} \ = & \ \mathit{TP(C_l) - W(C_l)} = \mathit{p_u - W(C_l)}
	\end{align*}
and $\mathit{\epsilon(C_l) = \min (\epsilon_1(C_l), \epsilon_2(C_l))}$. Therefore it is clear that if $\mathit{CS(C_l) = sleeping}$ then $\mathit{C_l}$ uses $\mathit{d_h(C_k)}$ to compute its $\mathit{\epsilon(C_l)}$. Now following the same way as we have shown for the case of receiving  $\langle connect\rangle$, it can be shown that upon receiving $\langle proceed\mathit{(d_h(C_k))} \rangle$,  $\mathit{C_l}$ correctly computes each of its local variable without violating any of the dual constraints.
\end{proof}

We claim that the approximation factor achieved by our distributed algorithm on a graph of $n$ nodes is $(2 - \frac{1}{n-1})$ of the optimal (OPT). This can be proved from the facts that $d_v = \sum_{S: v \in S} y_S$ for each node $v \in V$ and $W(C) = \sum_{S \subseteq C} y_S$ for each component $C$. Let $OPT_{LP}$ and $OPT_{IP}$ be the optimal solutions to (LP) and (IP) of PCST problem respectively. Then it is obvious that $\sum_{S \subset V} y_S \leq OPT_{LP} \leq OPT_{IP}$.

\begin{theorem}[Goemans and Williamson, \cite{GW_1995}]
\label{theorem:gw-pcst1}
	D-PCST algorithm selects a set of edges $F'$ and a set of vertices $X$  such that 
	\begin{equation} \label{DPCST-eq1}
		\sum_{e \in F'} w_e + \sum_{v \in X} p_v \leq (2 - \frac{1}{n-1}) \sum_{S \subset V} y_S \leq (2 - \frac{1}{n-1}) OPT_{IP} 
		\end{equation} 
		 where $n = |V|$ and $OPT_{IP}$ is the optimal solution to the IP of the PCST. 
\end{theorem}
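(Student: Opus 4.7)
The plan is to reduce the correctness of the bound to the original Goemans--Williamson primal--dual analysis by verifying that the $y_S$ values implicitly maintained by D-PCST form a feasible dual solution and that the selected edges $F'$ and the set $X$ of penalty nodes satisfy the same tight-constraint correspondences as in the centralized algorithm. First I would establish dual feasibility: Lemma \ref{out-lemma1} together with analogous case analyses for the other message scenarios (when the receiving component is already \emph{active} or \emph{inactive}) shows that whenever a component $C$ updates its local variables by $\epsilon(C)$, this corresponds to raising $y_C$ by exactly $\epsilon(C) = \min(\epsilon_1(C), \epsilon_2(C))$, which is the slack of the tightest of the two dual constraints. Hence neither the edge packing constraint $\sum_{S: e \in \delta(S)} y_S \leq w_e$ nor the penalty packing constraint $\sum_{S \subseteq U} y_S \leq \sum_{v \in U} p_v$ is ever violated, so weak LP duality immediately yields $\sum_{S} y_S \leq OPT_{LP} \leq OPT_{IP}$, establishing the rightmost inequality of (\ref{DPCST-eq1}).

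Next I would set up the primal--dual accounting for the leftmost inequality. An edge $e$ is turned into a branch edge and enters $F'$ only when two components merge through $e$, which the algorithm triggers precisely at the moment $\sum_{S: e \in \delta(S)} y_S = w_e$; thus $w_e = \sum_{S: e \in \delta(S)} y_S$ for every $e \in F'$. Likewise, the set $X$ is a disjoint union of components $D$ that were \emph{deactivated} during the growth phase and that remain pruned (the root component is never deactivated, so $r \notin X$). Each such $D$ was deactivated exactly when $\epsilon_2(D) = 0$, i.e.\ when $\sum_{S \subseteq D} y_S = \sum_{v \in D} p_v$. Summing and interchanging the order of summation yields
\begin{equation*}
\sum_{e \in F'} w_e + \sum_{v \in X} p_v \;=\; \sum_{S} y_S \cdot |\delta_{F'}(S)| \;+\; \sum_{S \subseteq X} y_S,
\end{equation*}
which reduces the desired bound to showing that the coefficient of each $y_S$ is, on average, at most $2 - \tfrac{1}{n-1}$.

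To obtain this average bound I would adopt the standard GW time-based moat argument. Parameterise the growth phase by $\epsilon$, so that during an interval in which the set of active components is $\mathcal{A}_t$, the dual objective grows at rate $|\mathcal{A}_t|$ while the right-hand side above grows at rate $\sum_{C \in \mathcal{A}_t} \deg_{H_t}(C)$, where $H_t$ is the forest consisting of the final branch edges together with a single ``super-node'' representing each not-yet-deactivated component that is eventually pruned. A combinatorial lemma on forests shows that the average degree of active components in $H_t$ is at most $2 - \tfrac{2}{|\mathcal{A}_t|}$, and since the root component is always inactive we may bound this by $2 - \tfrac{2}{n_t+1}$ where $n_t$ is the number of non-root active components; integrating over $\epsilon$ and using $n_t \leq n - 1$ at every instant gives the factor $2 - \tfrac{1}{n-1}$.

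The main obstacle will be the pruning accounting: I must show that deleting the subtrees of pruned deactivated components does not leave over any spurious edge cost. Concretely, for a deactivated component $D \subseteq X$ the branch edges used inside $D$ during the growth phase have their packing cost $\sum_{e \in D} w_e$ cancelled by the telescoping $\sum_{S \subsetneq D} y_S$, leaving precisely $\sum_{v \in D} p_v$ to be paid by the tight penalty constraint for $D$. Handling the case where a deactivated subcomponent is later absorbed into the root component (and hence only partially pruned) requires checking that the relevant dual variables are counted with the correct degree in $H_t$, and this is the delicate point that must be verified carefully before the integration above yields the claimed ratio.
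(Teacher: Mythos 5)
Your overall strategy---tight edge-packing constraints for $F'$ and tight penalty-packing constraints for the deactivated components forming $X$, followed by a degree-counting argument over the contracted forest of components as the duals grow---is essentially the route the paper takes (the paper phrases your ``integration over $\epsilon$'' as an induction over the successive increments $\epsilon(C)$, and your $H_t$ is its graph $H$ with type-$A$ and type-$I$ vertices). However, there are two genuine problems. First, the combinatorial lemma you invoke is false as stated in the rooted setting: the average degree of active components in $H_t$ is not bounded by $2 - \frac{2}{|\mathcal{A}_t|}$, nor by $2 - \frac{2}{n_t+1}$. Take $H_t$ to be a path on vertices $a_1,\dots,a_k,r$ (in that order) with all $a_i$ active and $r$ the inactive root component: the active degrees sum to $2k-1$, whereas $(2-\frac{2}{k+1})k = 2k - \frac{2k}{k+1} < 2k-1$ for every $k \geq 2$. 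The correct statement is $\sum_{v\ \mathrm{active}} d_v \leq 2|\mathcal{A}_t| - 1$, i.e.\ average degree at most $2 - \frac{1}{|\mathcal{A}_t|} \leq 2 - \frac{1}{n-1}$, and it rests entirely on the claim that \emph{every inactive component appearing non-isolated in $H_t$, other than the root component, has degree at least two}. That claim is exactly the step you defer as ``the delicate point'' and never establish; it is the heart of the paper's proof, where it is derived from the pruning phase (a non-root deactivated component that were a leaf of $H$ would have all of its nodes pruned, so its unique incident branch edge could not survive into $F'$---a contradiction). Without that claim the degree count does not close, so as written the proposal has a genuine gap even though the final factor $2-\frac{1}{n-1}$ it announces is the right one.

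Second, your rate argument (``the dual objective grows at rate $|\mathcal{A}_t|$'') presupposes the centralized picture in which all active moats grow simultaneously. In D-PCST only one component raises its dual at a time, and a sleeping component has been assigned no dual value at the moment some other component computes $\epsilon(C)$. The paper bridges this by classifying each sleeping neighbour $C'$ as type $A$ or type $I$ according to whether $TP(C') > d_h(C)$---that is, whether $C'$, once woken and retroactively credited with deficit $d_h(C)$, would still be active---and by the guarantee of Lemma~\ref{out-lemma1} that the retroactive initialization $d_u = W(C') = d_h(C)$ is consistent with both dual packing constraints. Your proposal needs an analogous argument before the simultaneous-growth accounting can legitimately be applied to the distributed execution.
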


The above theorem is in fact a transcript of the analogous theorem of Goemans and Williamson \cite{GW_1995} to the distributed setting. We present below the analysis of the approximation ratio for the distributed setting. The main challenge in this proof is to preserve the dual packing constraints in a distributed way so that approximation ratio is preserved.

\begin{proof}
In the construction of $\mathit{F'}$ if a node $v \in V$ is not covered by $\mathit{F'}$ then $v$ must belong to some component deactivated at some point of execution of the algorithm. Let $\mathit{X = \{C_1, C_2,....C_z\}}$ is the set of deactivated components whose nodes are not covered by $\mathit{F'}$. Therefore $\mathit{X}$ can be considered as a set of disjoin subsets of vertices and each subset is some $\mathit{C_j}$ for $j: 1 \leq j \leq z$. Since each $\mathit{C_j}$ is a deactivated component, therefore it follows the fact that $\mathit{\sum_{S \subseteq C_j} y_S = \sum_{v \in C_j} p_v}$. For each edge $\mathit{e \in F'}$ it also follows that $\mathit{\sum_{S: e \in \delta(S)} y_S = w_e}$ and this implies $\mathit{\sum_{e \in F'} w_e = \sum_{e \in F'} \sum_{S: e \in \delta(S)} y_S}$. Putting these in the inequality ~(\ref{DPCST-eq1}) we get
 	\begin{equation} \label{DPCST-eq2}
		\mathit{\sum_{e \in F'} \sum_{S: e \in \delta(S)} y_S + \sum_{j} \sum_{S \subseteq C_j} y_S} \leq (2 - \frac{1}{n-1}) \mathit{\sum_{S \subset V} y_S}
		\end{equation} 
  	  	
  	 Now it can be shown by the method of induction that for each $\mathit{\epsilon(C)}$ computed by a component $\mathit{C}$, the inequality (\ref{DPCST-eq2}) always holds. Here we show for the case of $\mathit{\epsilon(C)} > 0$. 
  	
  	At the beginning of the algorithm the inequality (\ref{DPCST-eq2}) holds since $\mathit{F' = \phi}$, the component containing the root node $r$ is the only trivial single node tree and $\mathit{y_C} = 0$ for each single node component $\mathit{C}$. Let $\mathbb{C}$ is the set of components in the graph when a component $\mathit{C}$ computes its $\mathit{\epsilon(C)}$. Components of $\mathbb{C}$ are categorized into two types of components namely type $A$ and type $I$ as follows:
\begin{itemize}  	
\item A component $\mathit{C'} \in \mathbb{C}$ is denoted as type $A$ if  $\mathit{CS(C') = active}$ {\bf or} $\mathit{CS(C')} = \mathit{sleeping} \wedge \mathit{TP(C') > d_h(C)}$.
\item A component $\mathit{C'} \in \mathbb{C}$  is denoted as type $I$ if $\mathit{CS(C') = inactive}$ {\bf or}  $\mathit{CS(C')} = \mathit{sleeping} \wedge \mathit{TP(C') \leq d_h(C)}$.
\end{itemize}  
The type of a component $\mathit{C}$ is denoted by $\mathit{type(C)}$.
To show that the inequality (\ref{DPCST-eq2}) always holds first we construct a special graph termed as $H=(V', E')$. The set of components of $\mathbb{C}$ is considered as the set of vertices $V'$ of the graph $H$. The vertex set $V'$ contains two types of vertices namely type $A$ and type $I$. The set of edges is $E'= \{\mathit{ e \in (\delta(C') \cap F')} : \mathit{type(C')= A} \}$. All isolated vertices of type $I$ are discarded from the graph $H$. Let $\mathit{N_A}$ denotes the set of vertices of type $A$, $\mathit{N_I}$ denotes the set of vertices of type $I$, $\mathit{N_D}$ denotes the set of vertices of type $A$ such that each vertex of  $\mathit{N_D}$ corresponds to some $\mathit{C_j}$ for $j: 1 \leq j \leq z$, and  $d_v$ denotes the degree of a vertex $v$ in graph $H$. Note that degree of each vertex $\mathit{v \in N_D}$ is zero, i.e. $\mathit{N_D = \{v \in N_A : d_v} = 0\}$. For each $\mathit{\epsilon(C)} > 0$, maximum increment in the left hand side of the inequality (\ref{DPCST-eq2}) is $\mathit{\sum_{v \in N_A} \epsilon_vd_v + \sum_{v \in N_D}\epsilon_v}$, where $\mathit{\epsilon_v} \in (0, \epsilon(C)]$ for each vertex $v \in V'$ (note that here $\epsilon_v$ is the actual adjusted value for a vertex $v$ in $H$ which is corresponding to the component $C_v$ and this correct adjustment of $\epsilon_v$ to the dual variable $y_{C_v}$ is ensured by the Lemma~\ref{out-lemma1}). On the other hand maximum increment in the right hand side of the inequality is $(2 - \frac{1}{n - 1})\mathit{\sum_{v \in N_A} \epsilon_v}$. Therefore we can write,
  	\begin{align*} \mathit{\sum_{v \in N_A - N_D} \epsilon_vd_v + \sum_{v \in N_D}\epsilon_v} & \leq (2 - \frac{1}{n - 1})\mathit{\sum_{v \in N_A} \epsilon_v}
  	\end{align*}  
Writing the above inequality in details we get  	
  	\begin{align*}
  	\mathit{\sum_{v \in N_A - N_D} \epsilon_vd_v} & \leq (2 - \frac{1}{n - 1})\mathit{\sum_{v \in N_A - N_D}\epsilon_v} + (2 - \frac{1}{n - 1})\mathit{\sum_{v \in N_D}\epsilon_v - \sum_{v \in N_D}\epsilon_v}
	\end{align*} 
Since degree of each vertex in $\mathit{N_D}$ is zero, therefore the coefficient $(2 - \frac{1}{n - 1})$ of the term $(2 - \frac{1}{n - 1})\mathit{\sum_{v \in N_D}\epsilon_v}$ must be equal to 1. This implies the following inequality
	\begin{align*}
  	\mathit{\sum_{v \in N_A - N_D} \epsilon_vd_v} & \leq (2 - \frac{1}{n - 1})\mathit{\sum_{v \in N_A - N_D}\epsilon_v} 
	\end{align*} 
Rewriting the left hand side of the above inequality in terms of set $\mathit{N_A}$, $\mathit{N_I}$, and $\mathit{N_D}$ we get
	\begin{equation} \label{DPCST-eq3}
  	\mathit{\sum_{v \in N_A - N_D} \epsilon_vd_v  \leq \sum_{v \in (N_A - N_D) \cup N_I} \epsilon_vd_v - \sum_{v \in N_I}\epsilon_vd_v}
  	\end{equation} 	
	
	Before continuing with the proof we show that in graph $H$ there can be at most one leaf vertex of type $I$ which is corresponding to the component containing $r$. Suppose by contradiction $v \in V'$ is a leaf vertex of type $I$ in graph $H$  which is not the root vertex containing $r$ and an edge $e$ incidents on $v$ such that $\mathit{SE(e) = branch}$. Let $\mathit{C_v}$ be the {\em inactive} component corresponding to the vertex $v$. Since $C_v$ is a leaf of $H$ therefore the edge $\mathit{e \in F'}$. Note that after the termination of the {\em pruning} phase, $\mathit{F'}$ is the set of {\em branch} edges selected for the PCST. Since the state of $\mathit{C_v}$ is $\mathit{inactive}$ and it does not contain $r$, therefore it is deactivated at some point of execution of the algorithm and $\mathit{labelled\_flag = TRUE}$ for each $u \in C_v$. Furthermore, since $\mathit{C_v}$ is a leaf component, therefore no node $\mathit{u \in C_v}$ can be an intermediate node on the path of $\mathit{branch}$ edges between the vertex $r$ and a vertex of the status $\mathit{labelled\_flag = FALSE}$. Since $\mathit{labelled\_flag = TRUE}$ for each node $u \in C_v$ and $\mathit{C_v}$ is an $\mathit{inactive}$ leaf component therefore by the {\em pruning} phase of the algorithm each node $\mathit{u \in C_v}$ is pruned and $\mathit{SE(e') = basic}$ for each edge $e'$ such that $e' \in \delta(u)$. In this case one of the $e'$ must be $e$ such that $\mathit{SE(e) = basic}$, a contradiction to the fact that $\mathit{SE(e)= branch}$. Therefore except the root vertex, all other vertex of type $I$ are non leaf vertex in graph $H$. This fact implies that sum of degrees of all vertices of type $I$ in graph $H$ is at least $2\mathit{|N_I|} - 1$.

Since $\mathit{\epsilon_v} \in (0, \mathit{\epsilon(C)]}$, therefore replacing each $\mathit{\epsilon_v}$ by $\mathit{\epsilon(C)}$ we get the inequality (\ref{DPCST-eq3}) as follows
  	\begin{align*}
  	\mathit{\sum_{v \in N_A - N_D} \epsilon(C) d_v} & \leq 2 \mathit{\epsilon(C) (|(N_A - N_D) \cup N_I|} - 1) -  \mathit{\epsilon(C)} (2\mathit{|N_I|} - 1)
  	\end{align*}
  	(In the above inequality we use the fact that sum of degrees of all vertices is $2m$ where $m$ is the total number of edges in the graph.)
  	 
Since $(N_A - N_D)$ and $N_I$ are disjoint, therefore $|(N_A - N_D) \cap N_I| = |\phi| = 0$. Using this fact we get
  	\begin{align*}
  	\mathit{\sum_{v \in N_A - N_D} d_v} \leq &  2(\mathit{|(N_A - N_D})|) + 2\mathit{|N_I|} - 2 - 2\mathit{|N_I|} + 1\\
  	 = &  2(\mathit{|N_A - N_D|}) - 1\\
   	 = &  (2 - \frac{1}{\mathit{|N_A - N_D|}})|N_A - N_D|\\
    	 \leq &  (2 - \frac{1}{n - 1})\mathit{|N_A - N_D|}
  	\end{align*}
  	The last inequality holds since the number of type $A$ components is at most $n - 1$ for $n$ node graph, i.e. $\mathit{|N_A - N_D|} \leq (n - 1)$. Similarly it can be shown that the inequality (\ref{DPCST-eq1}) also holds for the case $\mathit{\epsilon(C)} \leq 0$. Therefore the inequality (\ref{DPCST-eq1}) always holds for every computed value $\mathit{\epsilon(C)}$ by a component $\mathit{C}$ in the graph.  
 \end{proof}

\subsection{Deadlock issue}
We show here that deadlock does not exist in D-PCST algorithm. Except the $\langle connect \rangle$, upon receiving any other messages a node can instantly reply or proceed with further actions of the algorithm. For example whenever a node $u$ receives $\langle test \rangle$ then $u$ immediately replies with $\langle status \rangle$ or $\langle reject \rangle$ by using its own local information and does not wait for any other event to be occurred on some other nodes. Now consider the case of {\em merging} of two neighboring components say $\mathit{C}$ and $\mathit{C'}$. This is only the case where a component needs to wait for another component to proceed further. Let the component $\mathit{C}$ sends $\langle connect \rangle$ to merge with the component $\mathit{C'}$. Upon receiving $\langle connect \rangle$, $\mathit{C'}$ responses to $\mathit{C}$ within a finite delay. If $\mathit{CS(C') = inactive}$ then $\mathit{C'}$ immediately sends $\langle accept \rangle$ without any further delay. If $\mathit{CS(C') = sleeping}$, then first it changes $\mathit{CS(C')}$ to $\mathit{active}$ and then finds its $\mathit{\epsilon(C')}$ and depending on  $\mathit{\epsilon(C')}$ it sends $\langle accept \rangle$ or $\langle refind\_epsilon \rangle$ to $\mathit{C}$. Since $\mathit{C'}$ does not depends on any event that delays the process of finding its $\mathit{\epsilon(C')}$ therefore $\mathit{C'}$ can response to $\mathit{C}$ within a finite delay which omits the possibility of any deadlock in between $\mathit{C'}$ and $\mathit{C}$. And in our proposed D-PCST algorithm, component grows only by sequential merging and no concurrent merging is allowed to happen. All of these observations ensure that deadlocks do not exist.

\section{Conclusion} \label{conclusion}
In this paper we propose D-PCST, the first asynchronous distributed deterministic algorithm for the PCST problem having an approximation factor of $(2 - \frac{1}{n - 1})$ of the optimal. Our algorithm is based on the sequential Goemans and Williamson algorithm (GW-algorithm)~\cite{GW_1995}. Compared to a trivial distributed implementation of the GW-algorithm using convergecast and broadcast (as, e.g.\ in~\cite{Rossetti_2015}, also in Subsubsection~\ref{subsec:message_complexity}) D-PCST behaves slightly worse in the worst case. However it does not suffer from single point of failure problems and requires only local message exchange in order to compute the PCST. Moreover, it has a better complexity in sparse or low diameter graphs. 
Since D-PCST is distributed in nature we believe that it can serve as a first step and a basis for further improvements of the 
the message and time complexity, as well as of the approximation ratio. In particular, we would like to 
investigate the applicability of our techniques in the direction of obtaining a distributed version of the PTAS of Bateni et al. \cite{Bateni:2011:PSP:2133036.2133115} for PCST in planar graphs; such a result would be of great theoretical and practical interest.



\appendix
\vspace{-2.2em}
\section{Description of the centralized Goemans-Williamson PCST algorithm \cite{GW_1995}}
\vspace{-.4em}
Since our algorithm is inspired by the centralized PCST algorithm proposed by Goemans and Williamson (GW-algorithm) \cite{GW_1995}, here we briefly describe the algorithm. GW-algorithm consists of two phases namely {\em growth} phase and {\em pruning} phase. The growth phase maintains a forest $F$ which contains a set of candidate edges being selected for the construction of the PCST. Initially $F$ is empty, each node is unmarked, and each node is considered as a connected component containing a singleton node. The growth phase also maintains a set of components whose possible states can be either active or inactive. If a component $C$ is active then the current state of $C$ is set to $1$, i.e., $CS(C)=1$, otherwise $CS(C)=0$. 
The state of the component containing $r$ is always inactive. Initially, except the root component, all other components are in active state. Associated with each component $C$, there is a dual variable $y_C$. Each $y_C$ is initialized to $0$. The GW-Algorithm also maintains a deficit value $d_v$ for each vertex $v \in V$ and a weight $W(C)$ for each component $C$. In each iteration the algorithm finds an edge $e = (u, v)$ with $u\in C_p$, $v \in C_q$, $C_p \neq C_q$, that minimizes $\epsilon_1 = \frac{w_e - d_v - d_u}{CS(C_p) + CS(C_q)}$ and a $C$ such that $CS(C) = 1$ which minimizes $\epsilon_2 = \sum_{v \in C}p_v - W(C)$. And then it finds the global minimum $\epsilon = min (\epsilon_1, \epsilon_2)$. Depending on the  value of $\epsilon$, the algorithm may decide to do any one of the two operations: (i) if $\epsilon = \epsilon_1$ then it merges two distinct components $C_p$ and $C_q$ using the edge $e$ (that gave the min $\epsilon$) and adds $e$ to $F$.   (ii) if $\epsilon = \epsilon_2$ then the corresponding component $C$ is deactivated. Note that for every decided value of $\epsilon$, for each component $C$ where $CS(C)=1$, the value of $W(C)$ (as well as the implicit $y_C$) and the value of each $d_v : v \in C$ is increased by the value of $\epsilon$. In case of merging, if the resulting component contains the root $r$ then it becomes inactive; otherwise it is active. In the other case i.e. deactivation of component $C$, the algorithm marks each $v \in C$ with the name of the component $C$. 
Since in each iteration of the algorithm, sum of the total number of components or the number of active components decreases therefore after at most $2n -1$  iterations all components become inactive. In pruning phase the algorithm removes as many edges as possible from $F$ without violating the two properties: (i) all unmarked vertices must be connected to the root, as these vertices never appeared in any deactivated components (ii) if a vertex with mark $C$ is connected to the root then every vertex marked with $C' \supseteq C$ should be connected to the root. The GW-algorithm achieves an approximation ratio of $(2 - \frac{1}{n-1})$  and running time of $O(n^2 \log n)$ for a graph of $n$ vertices.

\vspace{-1.2em}
\section{Pseudocode of the D-PCST algorithm}
\vspace{-2em}
\begin{algorithm}[ht!]
\caption{ The D-PCST algorithm : pseudocode for node $v$} \label{alg:D-PCST-algo}
\begin{algorithmic}[1]
\State {Upon receiving no message} 
\State \textbf{execute procedure} $initialization()$
\If {$v = r$} \Comment {Spontaneous awaken of the root node}
	\State {$CS \gets inactive; root\_flag \leftarrow TRUE; prize\_flag \gets FALSE;$}
	\State \textbf{execute procedure} $proc\_initiate()$
\Else
	\State {$CS \gets sleeping; root\_flag \leftarrow FALSE; prize\_flag \gets TRUE;$}
\EndIf
\vspace{1.5em}
\Procedure{$initialization$}{$ $} 
	\ForAll {$e \in \delta(v)$} 
		\State {$SE(e) \gets basic; EPM(e) \gets FALSE;$}  \Comment {$EPM$ : Edge for Prune Message} 
		
	\EndFor 
	
	\State {$d_h \gets 0; d_v \gets 0; W \leftarrow 0; labelled\_flag \gets FALSE; prune\_msg\_count \gets 0; proceed\_in\_edge \gets \phi; proceed\_flag \gets FALSE; leader\_flag \gets FALSE; received\_ts \leftarrow \infty;$}
\EndProcedure
\vspace{1.5em}
\Procedure{$proc\_initiate$}{$ $}
\State $SN \leftarrow find; find\_count \leftarrow 0; best\_epsilon \leftarrow \infty; best\_edge \leftarrow \phi; LC \leftarrow v; TP \gets 0; PF \gets FALSE;  back\_edge \gets \phi; TS \leftarrow \infty;$ 
\ForAll {$e \in \delta(v)$} 
	\If {$SE(e) = branch$}
		\State {{\bf send} $\langle initiate(LC,SN) \rangle$ on $e$}  
		\State $find\_count \leftarrow find\_count + 1;$\Comment{Count the number of $\langle initiate \rangle$ that are sent}
	\EndIf
\EndFor
\If {$SN = find$}
	\State \textbf{execute procedure} $proc\_test()$
\EndIf
\EndProcedure
\algstore{myalg}
\end{algorithmic}
\end{algorithm}

\begin{algorithm}                     
\begin{algorithmic} [1]                   
\algrestore{myalg}
\State Upon receiving $\langle initiate(L,S) \rangle$ on edge $e$
\State $ SN \leftarrow S; find\_count \leftarrow 0;  best\_epsilon \leftarrow \infty; best\_edge \leftarrow \phi; LC \leftarrow L; TP \gets 0; PF \gets FALSE; back\_edge \gets \phi; in\_branch \leftarrow e;$ 
\ForAll {$e' \in \delta(v) : e' \neq e $} 
	\If {$SE(e') = branch$}
		\State { \bf send}  $\langle initiate(L,S) \rangle$ on $e'$
		\State $find\_count \leftarrow find\_count + 1;$ \Comment{Count the number of  $\langle initiate \rangle$ that are sent}
	\EndIf
\EndFor
\If {$S = find$}
	\State \textbf{execute procedure} $proc\_test()$
\EndIf
\vspace{1em}
\Procedure{$proc\_test$}{$ $}
\State $test\_count \leftarrow 0;$
\ForAll {$e \in \delta(v)$}
	\If {$SE(e) = basic$ {\bf or} $SE(e) = refind$}
		\State {\bf send} $\langle test(LC) \rangle$ on $e$
		\State $test\_count \gets test\_count + 1;$	\Comment {Count the number of  $\langle test \rangle$ that are sent}
	\EndIf
\EndFor 
\EndProcedure
\vspace{1em}
\State {Upon receiving $\langle test(L) \rangle$ on edge $e$}
\If {$LC = L$}
	\State {\bf send} $\langle reject \rangle$ on $e$
\Else
	\State {\bf send} $\langle status(CS, d_v) \rangle$ on $e$.
\EndIf
\vspace{1em}
\State {Upon receiving $\langle status(NS, d_u) \rangle$ on edge $e$} \Comment{$u$ is a node belongs to a neighboring component}
\State $test\_count \gets test\_count - 1;$	
\If { $CS = active$ {\bf and} $NS = sleeping$}
	\State {$\epsilon_1 \gets \frac{w_e - d_v - d_h}{2}$;}
\ElsIf {$CS = active$ {\bf and}  $NS = inactive$ }
	\State {$\epsilon_1 \gets w_e- d_v - d_u$;}
\ElsIf { $CS = inactive$  {\bf and} $NS = sleeping$}                            
	\State {$\epsilon_1 \gets w_e - d_v - d_h$;}
\ElsIf {$CS = inactive$ {\bf and} $NS = inactive$ }
	\If {$SE(e) = refind$}
		\State {$\epsilon_1 \gets w_e- d_v - d_u$;}
	\Else
		\State {$\epsilon_1 \gets \infty$;}
	\EndIf
\EndIf
\If {$\epsilon_1  < best\_epsilon$}
	\State {$best\_epsilon \gets \epsilon_1; best\_edge \gets e;$}
\EndIf
\algstore{myalg}
\end{algorithmic}
\end{algorithm}

\begin{algorithm}                     
\begin{algorithmic} [1]                   
\algrestore{myalg}	
\State \textbf{execute procedure} $proc\_report()$
\vspace{1em}
\State {Upon receiving $\langle reject \rangle$ on edge $e$}
\State $test\_count \gets test\_count - 1;$
	\State {$SE(e) \gets rejected;$}
\If {$proceed\_in\_edge = e$} 				\Comment{The edge $e$ becomes a {\em rejected} edge}
	\State {$proceed\_in\_edge \gets \phi; proceed\_flag \gets FALSE;$}
\EndIf
\State \textbf{execute procedure} $proc\_report()$
\vspace{1em}
\Procedure{$proc\_report$} {$ $}
\If {$find\_count = 0$ {\bf and} $test\_count = 0$} \Comment {Receives responses for each $\langle initiate \rangle$ and $\langle test \rangle$}
	\State {$SN \gets found$;}
	\If {$d_h < d_v$}
		\State {$d_h \gets d_v;$}
	\EndIf
	\If {$CS = active$}
		\State {$TP \gets TP + p_v$} \Comment {$TP$: (Total Prize) of the subtree rooted at $v$}
	\EndIf

	\If {$proceed\_flag = TRUE$}
		\State {$PF \gets TRUE;$}
		\If {$TS > received\_ts$}
			\State{$TS \gets received\_ts; back\_edge \gets \phi;$}		
		\EndIf
\EndIf
	\If {$in\_branch \neq \phi$}
		\State {{\bf send} $\langle report(best\_epsilon, d_h, TP, PF, TS) \rangle$ on $in\_branch$}
	\Else
		\State \textbf{execute procedure} $proc\_merge\_or\_deactivate\_or\_proceed()$
	\EndIf
\EndIf
\EndProcedure

\vspace{1em}
\State {Upon receiving $\langle report(\epsilon_1, d_k, T, P, temp\_ts) \rangle$ on edge $e$}
\State {$find\_count = find\_count - 1;$}
\If {$P = TRUE$}
	\State {$PF \gets TRUE;$}
	\If {$TS > temp\_ts$}
		\State {$TS \gets temp\_ts; back\_edge \gets e;$}
	\EndIf
\EndIf
\If {$CS = active$}
	\State {$TP \gets TP + T;$}
\EndIf
\If {$d_h < d_k$}
	\State {$d_h \gets d_k;$}
\EndIf
\If {$\epsilon_1  < best\_epsilon$}
\algstore{myalg}
\end{algorithmic}
\end{algorithm}

\begin{algorithm}                     
\begin{algorithmic} [1]                   
\algrestore{myalg}
	\State {$best\_epsilon \gets \epsilon_1; best\_edge \gets e;$}
\EndIf
\State \textbf{execute procedure} $proc\_report()$
\vspace{1em}

\Procedure{$proc\_merge\_or\_deactivate\_or\_proceed$}{$ $}
\State {$\epsilon_1 \gets best\_epsilon$}
	\If {$root\_flag = FALSE$ {\bf and} $CS = active$}
		\State $\epsilon_2 \gets TP - W;$
		\If {$\epsilon_1 < \epsilon_2$}
			\State {$best\_epsilon \gets \epsilon_1$}
		\Else
			\State {$best\_epsilon \gets \epsilon_2$}
		\EndIf

		\If {$best\_epsilon = \epsilon_2$}
			\State {$CS \gets inactive; d_v \gets d_v + \epsilon_2; W \gets W + \epsilon_2; d_h \gets d_h + \epsilon_2; labelled\_flag \gets TRUE;$}

			
			
			\State {$deactivate\_flag \gets TRUE;$}	\Comment{$deactivate\_flag$ is a temporary variable}
			\State {{\bf send} $\langle update\_info(\epsilon_2, root\_flag, deactivate\_flag, W, d_h) \rangle$ on all $e \in \delta(v)$ such that $SE(e) = branch$} 
				\State \textbf{execute procedure} $proc\_initiate()$ \Comment{Compute $\epsilon_1$ to send $\langle proceed \rangle$ }

		\Else	\Comment{Start the merge procedure at the leader node}
			\If {$SE(best\_edge) = branch$} 
				\State {{\bf send} $\langle merge(best\_epsilon, d_h) \rangle$ on $best\_edge$}	
			\Else
				\State {{\bf send} $\langle connect(v, W, d_v, d_h) \rangle$ on $best\_edge$;}
			\EndIf
		\EndIf
	\ElsIf {$CS = inactive$}
		\If {($\epsilon_1 = \infty$)}
			\If {$TS = \infty$}
				\If {$v = r$} \hspace{4em} \Comment{Starts of pruning phase at the root node $r$.}
					\ForAll {$e \in \delta(v)$} 
						\If {$SE(e) = branch$ {\bf or} $EPM(e) = TRUE$} 
							\State {{\bf send} $\langle prune \rangle$ on $e$}
						\EndIf
						\If {$SE(e) = branch$} 
							\State {$prune\_msg\_count \gets prune\_msg\_count + 1;$}
						\EndIf
					\EndFor 
				\EndIf
			\Else			\Comment{Start of sending $\langle back \rangle$}
				\If {$back\_edge \neq \phi$}
					\State {{\bf send} $\langle back \rangle$ on $back\_edge$}
				\ElsIf {$back\_edge = \phi$ {\bf and} $proceed\_flag = TRUE$}
					\State {{\bf send} $\langle back \rangle$ on $proceed\_in\_edge$}
					\State {$proceed\_in\_edge \gets \phi; proceed\_flag = FALSE;$}
				\EndIf
			\EndIf
	\ElsIf {($\epsilon_1 \neq \infty$)}
\algstore{myalg}
\end{algorithmic}
\end{algorithm}

\begin{algorithm}                     
\begin{algorithmic} [1]                   
\algrestore{myalg}	
			\State {{\bf send} $\langle proceed(d_h) \rangle$ on $best\_edge$}
			\If {$SE(best\_edge) = basic$}
				\State {$EPM(best\_edge) \gets TRUE;$}
			\EndIf
			\If {$SE(best\_edge) = refind$}
				\State {$SE(best\_edge) \gets basic;$}
			\EndIf

		\EndIf
	\EndIf
\EndProcedure

\vspace{1em}
\State {Upon receiving $\langle merge(\epsilon, d_k) \rangle$ on edge $e$}
\If {$SE(best\_edge) = branch$} \Comment Receiving node is an intermediate node
	\State {{\bf send} $\langle merge(\epsilon, d_k) \rangle$ on $best\_edge$}
\Else 						 \Comment Receiving node is a frontier node	
		\State {{\bf send} $\langle connect(v, W, d_v, d_k) \rangle$ on $best\_edge$;}
		
\EndIf
\vspace{1em}
		
\State {Upon receiving $\langle back \rangle$ on edge $e$}
\If {$back\_edge \neq \phi$}
	\State {{\bf send} $\langle back \rangle$ on $back\_edge$}
\ElsIf {$back\_edge = \phi$ {\bf and} $proceed\_flag = TRUE$}
	\State {{\bf send} $\langle back \rangle$ on $proceed\_in\_edge$}
	\State {$proceed\_in\_edge \gets \phi; proceed\_flag = FALSE;$}
\ElsIf {$back\_edge = \phi$ {\bf and} $in\_branch \neq \phi$}
	\State {{\bf send} $\langle back \rangle$ on $in\_branch$}
\Else
	\State \textbf{execute procedure} $proc\_initiate()$
\EndIf

\vspace{1em}
\State {Upon receiving $\langle proceed(d_k) \rangle$ on edge $e$}
\If {$SE(e) = branch$ {\bf and} $in\_branch = e$}
	\State {{\bf send} $\langle proceed(d_k)  \rangle$ on $best\_edge$}
	\If {$SE(best\_edge) = basic$}
		\State {$EPM(best\_edge) \gets TRUE;$}
	\EndIf
	\If {$SE(best\_edge) = refind$}
		\State {$SE(best\_edge) \gets basic;$}
	\EndIf
\ElsIf {$SE(e) = basic$} 
	\State {$proceed\_flag \gets TRUE; proceed\_in\_edge \gets e;$} 
	\If {$CS = sleeping$}
		\State \textbf{execute procedure} $wakeup(d_k)$
	\ElsIf{$CS = inactive$}
		\If {$in\_branch \neq \phi$}
			\State {{\bf send} $\langle proceed(d_k)  \rangle$ on $in\_branch$}
		\EndIf
\algstore{myalg}
\end{algorithmic}
\end{algorithm}

\begin{algorithm}                     
\begin{algorithmic} [1]                   
\algrestore{myalg}	
	\EndIf
\ElsIf {$SE(e) = branch$ {\bf and} $in\_branch \neq e$}
	\If {$in\_branch \neq \phi$}
		\State {{\bf send} $\langle proceed(d_k)  \rangle$ on $in\_branch$}
	\Else			\Comment {Receiving node is the leader node}
		\State \textbf{execute procedure} $proc\_initiate()$
	\EndIf
\EndIf

\vspace{1em}
\Procedure{$wakeup$}{$d_k$}
    \State  {$CS \leftarrow active; d_v \gets d_k; W \leftarrow d_k;$}
	\If {$d_k > d_h$}
		\State {$d_h \gets d_k;$}
	\EndIf
\State \textbf{execute procedure} $proc\_initiate()$
\EndProcedure

\vspace{1em}
\State {Upon receiving $\langle connect(NID, WN, d_u, d_k) \rangle$ on edge $e$} \Comment {$WN$: Weight of Neighboring component}
\If {$CS = sleeping$}
	\State {$CS \leftarrow active; d_h \gets d_k; d_v \gets d_k; W \leftarrow d_k;$}
	\State {$\epsilon_1 \gets \frac{w_e - d_v - d_u}{2}; \epsilon_2 \gets p_v - W;$}	
	\If {$\epsilon_1 < \epsilon_2$}
		\If {$v > NID$}
			\State {$leader\_flag \gets TRUE;$}
		\Else
			\State {$leader\_flag \gets FALSE;$}
		\EndIf
			
		\State {$d_h \gets d_h + \epsilon_1; d_v \gets d_v + \epsilon_1; W \gets W + WN + 2*\epsilon_1; SE(e) \gets branch;$}
		\State {{\bf send} $\langle accept(leader\_flag, root\_flag, W, d_h) \rangle$ on $e$}
		\If {$leader\_flag \gets TRUE$}
			\State \textbf{execute procedure} $proc\_initiate()$
		\EndIf
	\Else	\Comment {$\epsilon_1 \geq \epsilon_2$}
		\State {$CS \gets inactive; W \gets W + \epsilon_2; d_v \gets d_v + \epsilon_2; d_h \gets d_k + \epsilon_2; labelled\_flag \gets TRUE;$}
		\State {{\bf send} $\langle refind\_epsilon \rangle$ on $e$}
	\EndIf
\ElsIf {$CS = inactive$}
	\If {$root\_flag = TRUE$}
		\State {$leader\_flag \gets TRUE;$}
	\Else
		\State {$CS \gets active;$}
		\If {$v > NID$}
			\State {$leader\_flag \gets TRUE;$}
		\Else
			\State {$leader\_flag \gets FALSE;$}
		\EndIf
	\EndIf
\algstore{myalg}
\end{algorithmic}
\end{algorithm}                 
\begin{algorithm}                     
\begin{algorithmic}[1]  
\algrestore{myalg}	
	\State {$\epsilon_1 = w_e - d_v - d_u; W \gets W + WN + \epsilon_1;$}
	\State {$d_t \gets d_k + \epsilon_1;$} \Comment{$d_t$ is a temporary variable}
	\If {$d_h < d_t$}
		\State {$d_h \gets d_t;$}
	\EndIf
	\State {$deactivate\_flag = FALSE;$}		\Comment{$deactivate\_flag$ is a temporary variable}
	\State {{\bf send} $\langle update\_info(0, root\_flag, deactivate\_flag, W, d_h) \rangle$ on all $e' \in \delta(v): e' \neq e$ {\bf and} $SE(e') = branch$} 
	\State {$SE(e) \gets branch;$}
	\State {{\bf send} $\langle accept(leader\_flag, root\_flag, W, d_h) \rangle$ on $e$;}

		\If {($leader\_flag = TRUE$) }
			\State \textbf{execute procedure} $proc\_initiate()$
		\EndIf
\EndIf	
	
\vspace{1em}
\State {Upon receiving $\langle refind\_epsilon \rangle$ on edge $e$}
\If {$SE(e) = basic$}
	\State {$SE(e) \gets refind;$}
\Else
	\If {$in\_branch \neq \phi$}
		\State {{\bf send} $\langle refind\_epsilon \rangle$ on $in\_branch$}
	\Else
		\State \textbf{execute procedure} $proc\_initiate()$
	\EndIf
\EndIf	

\vspace{1em}

\State {Upon receiving $accept(LF, RF, TW, d_k)$} on edge $e$ \Comment {$TW$: Total Weight}
\State {$SE(e) \gets branch; root\_flag \gets RF; d_h \gets d_k; d_v \gets d_v + best\_epsilon; W \gets TW;$}
\If {$RF = TRUE$}
		\State {$CS \gets inactive;  prize\_flag = FALSE;$}
\Else
	\State {$CS \gets active;$}
\EndIf
\If {$proceed\_in\_edge = e$ {\bf and} $proceed\_flag = TRUE$}
	\State {$ proceed\_in\_edge \gets \phi; proceed\_flag \gets FALSE;$}
\EndIf
\State {$deactivate\_flag = FALSE;$}		\Comment{$deactivate\_flag$ is a temporary variable}
\State {{\bf send} $\langle update\_info(best\_epsilon, root\_flag, deactivate\_flag, TW, d_h) \rangle$ on all $e'\in \delta(v) : e' \neq e$ {\bf and} $SE(e') = branch$}
\If {$LF = FALSE$}
		\State \textbf{execute procedure} $proc\_initiate()$
\EndIf

\vspace{1em}
\State {Upon receiving $\langle update\_info(EV, RF, DF, TW, d_k) \rangle$} on edge $e$
\If {$RF = TRUE$ and $DF = FALSE$}
	\State {$CS \gets inactive; prize\_flag \gets FALSE;$}
\algstore{myalg}
\end{algorithmic}
\end{algorithm}                 
\begin{algorithm}                     
\begin{algorithmic}[1]  
\algrestore{myalg}	
\ElsIf {$RF = FALSE$ and $DF = TRUE$}
	\State {$CS \gets inactive; labelled\_flag \leftarrow TRUE;$}
\ElsIf{$RF = FALSE$ and $DF = FALSE$}
	\State {$CS \gets active;$}
\EndIf
\State {$root\_flag \gets RF; d_h \gets d_k; d_v \gets d_v + EV; W \gets TW;$}
\State {{\bf send} $\langle update\_info(EV, RF, DF, TW, d_k) \rangle$ on all $e' \in \delta(v) : e' \neq e$ {\bf and} $SE(e') = branch$}
\If {$v = r$}		\Comment{$r$ is the root node}
	\State \textbf{execute procedure} $proc\_initiate()$
\EndIf
\vspace{1em}
\State {Upon receiving $\langle prune \rangle$ on edge $e$}
\If {$root\_flag = TRUE$} 	\Comment {Pruning inside the root component}
	\If {($labelled\_flag = TRUE$) {\bf and} ($SE(e') = basic$ {\bf for each} $e' \in \delta(v) : e' \neq e$)} 
			\State {$prize\_flag \gets TRUE; root\_flag \gets FALSE;$}
			\State {{\bf send} $\langle backward\_prune \rangle$ on $e$}
			\State {$SE(e) \gets basic;$}
	\Else
		\ForAll {$e' \in \delta(v) : e'\neq e$}
			\If {$SE(e') = branch$ {\bf or} $EPM(e') = TRUE$}
				\State {{\bf send} $\langle prune \rangle$ on $e'$}
				\If{$SE(e') = branch$} 
					\State {$prune\_msg\_count \gets prune\_msg\_count + 1;$}
				\EndIf
			\EndIf
		\EndFor 
	\EndIf
\Else  		\Comment {Pruning inside non-root inactive component}	
	\ForAll {$e' \in \delta(v) : e' \neq e$} 
		\If {$SE(e') = branch$ {\bf or} $EPM(e') = TRUE$ }
			\State {{\bf send} $\langle prune \rangle$ on $e'$}
			\If {$SE(e') = branch$}
				\State {$SE(e') \gets basic;$}
			\EndIf
		\EndIf
	\EndFor 
\EndIf	
\vspace{1em}
\State {Upon receiving $\langle backward\_prune \rangle$ on edge $e$}
\State {$prune\_msg\_count \gets prune\_msg\_count - 1; SE(e) \gets basic;$}
\If{$labelled\_flag = TRUE$ {\bf and} $prune\_msg\_count = 0$}
	\If {$in\_branch \neq \phi$} 
		\State {$prize\_flag \gets TRUE; root\_flag \gets FALSE;$}
		\State {{\bf send} $\langle backward\_prune \rangle$ on $in\_branch$}
		\State {$SE(in\_branch) \gets basic;$}
	\EndIf
\EndIf
\end{algorithmic}
\end{algorithm}

\end{document}